\newcommand{\indicator}[1]{\mathbbm{1}_{\left\{ {#1} \right\} }}
\newcommand{\Call}{\mbox{C}}
\newcommand{\Put}{\mbox{P}}
\newcommand{\VIX}{\mbox{VIX}}
\newcommand{\smallVIX}{\mbox{\footnotesize VIX}}
\newcommand{\IFI}{{}_1{F}_1}
\newcommand{\E}{\mathbb E}
\newtheorem{remark}[theorem]{Remark}
\newtheorem{condition}[theorem]{Condition}
\title{Extreme-Strike Comparisons and Structural Bounds for SPX and VIX Options}
\author{A. Papanicolaou\thanks{Department of Finance and Risk Engineering, NYU Tandon School of Engineering, 6 MetroTech Center, Brooklyn, NY 11201 {\em ap1345@nyu.edu}. Part of this research was performed while the author was visiting the Institute for Pure and Applied Mathematics (IPAM), which is supported by the National Science Foundation. A special `thank you' to the associate editor for your consideration and input in the preparation of this article.}}
\begin{document}
\bibliographystyle{alpha}
\maketitle
\begin{abstract}
This article explores the relationship between the SPX and VIX options markets. High-strike VIX call options are used to hedge tail risk in the SPX, which means that SPX options are a reflection of the extreme-strike asymptotics of VIX options, and vice versa. This relationship can be quantified using moment formulas in a model-free way. Comparisons are made between VIX and SPX implied volatilities along with various examples of stochastic volatility models.
\end{abstract}
\section{Introduction}
The S\&P500 (SPX) index and its volatility have been shown to have strong negative correlation. For this reason there is a great deal of interest in the Chicago Board Options Exchange (CBOE) volatility index (VIX), the options-based volatility index and its derivatives for hedging tail risk. In particular, the CBOE has designed the VIX Tail Hedge (VXTH) index based on an SPX and VIX trading strategy, which has been backtested on data and shown to have performed better than the SPX over time periods when there has been a crisis event. Part of the VXTH strategy is to buy high-strike European call options on VIX to insure against losses in SPX, as a large rise in the VIX usually coincides with a drop in the SPX. Prior to VIX derivatives, a similar insurance strategy might have been to buy low-strike European put options on SPX. This similarity means that there is information on the risk-neutral distribution for VIX that is implied by low-strike SPX put options. The markets for SPX and VIX options are very liquid, and so it is useful to have structural bounds that quantify the relationship between the two. In particular, this paper's Lemma \ref{lemma:MGFbound} will show that for discounted SPX price $S_te^{-rt}$ being a local martingale (where $r\geq0$ is the risk-free rate), the moment generating function (MGF) of $\VIX_T^2$ with $\xi\in\mathbb R$ satisfies
\begin{equation}
\label{eq:inequality}
\E_te^{\xi\smallVIX_T^2}\leq \frac1q\E_tS_{T+\tau}^{-\frac{2\xi q}{\tau}}+\frac1p\E_t\left(e^{r\tau}S_T\right)^{\frac{2\xi p}{\tau}}\qquad\hbox{for times $0\leq t\leq T$,}
\end{equation}
where $\E_t$ denotes risk-neutral expectation conditional on the market at time $t$, $\tau=$ 30 days, and both $p\geq1$ and $q\geq 1$ are H\"older-conjugate exponents with $\frac1p+\frac1q=1$. If there exists $\xi>0$ for which the right-hand side of \eqref{eq:inequality} is finite, then the SPX market indicates the VIX's distribution is not \textit{heavy tailed} (i.e. the MGF of $\VIX_T^2$ exists from some $\xi>0$). Conversely, if the MGF $\VIX_T^2$ is infinite for all $\xi>0$, then the right-hand side of \eqref{eq:inequality} is always infinite and there exists no negative moments for $S_{T+\tau}$.

Stochastic volatility and L\'evy jump processes (or a combination of these two) have been used in option pricing since the 1990's, and at that time it may have seemed as if volatility derivatives could be priced and hedged from a well-calibrated model. Indeed, the availability of VIX options data has been an innovation in the study of volatility because it gives new information in addition to the data from SPX options, which means more data to use when calibrating a model. However, data from various days throughout the 2000's exhibit VIX option implied volatilities that do not have a good fit to standards such as the square root (Heston) model. From the perspective of someone searching for the ``right model", model specification remains an important issue because it is a nontrivial task to fit a single model to both the SPX and VIX implied-volatility surfaces. Hence, it would be quite useful if there were a general theory to explain the relationship between markets for VIX and markets for SPX options, and vice versa. This paper presents the beginnings of such a theory in a model-free context.

Of further interest is the understanding of the implied volatility from VIX options. It is certainly true that every asset class requires tailored expert analysis of implied volatility, but VIX option implied volatility is special because it is really the implied \textit{volatility-of-volatility} for the SPX, and hence it is saying something about SPX options. In particular, implied volatilities from VIX options are sometimes very high (i.e. in the range of 80\% for high-strike VIX call options), but there is not yet a standard for making comparisons to implied volatilities observed from SPX options. It would be a significant contribution if implied volatiles from VIX call options could be used to make definitive statements about the no-arbitrage range for SPX implied volatilities. Using the moment formula of \cite{lee2004}, this paper identifies a relationship using extreme-strike asymptotics.

\subsection{Literature Review}
The VIX formula (as it has been calculated since 2003) is described in \cite{demeterfi}. A general description of how volatility derivatives are designed and traded is provided in \cite{carrLee2009} with particular attention paid to volatility markets in the era post-2008 crisis. Stochastic and local volatility models are described in \cite{gatheralBook}, and pricing of volatility derivatives based on these models is a standard application of partial differential equation methods. Pricing of VIX options using square-root volatility and jumps is covered in \cite{sepp2008}. An alternative to model-based pricing/hedging are the model-free results found in \cite{carrLee2008,carrLee2010,frizGatheral}; VIX options are priced using non-parametric approximations of the pricing kernel in \cite{dacheng}.

The issues in fitting the Heston model to VIX options are explained in \cite{gatheralSlides}. There has been some success in fitting VIX options to market models of the variance swap term structure (see \cite{carrSun, contKokholm2013}), and the added explanatory power from the inclusion of jumps has been demonstrated in \cite{madanYor2011}. Some studies have shown improved fits to large-strike VIX option implied-volatilities using heavy-tailed process (see \cite{badranBaldeaux2013,gatheral2013,drimus2012}), but heavy tails are not necessarily required as shown in \cite{papanicolaouSircar2014} using a Markov-chain modulation of the Heston model. In  \cite{labordere2014} the links between SPX options and VIX options are studied in a constrained hedging problem, and a link between and SPX and VIX markets is established in \cite{papanicolaou2016} using a time-spread portfolio.

\subsection{Main Results Of This Paper}
The main result in this paper is the identification of a link between VIX options and negative moments in the SPX price. In particular, the existence of negative moments in the SPX's risk-neutral distribution is an indication that the VIX's risk-neutral distribution is not heavy tailed. Conversely, if the risk-neutral distribution of $\VIX_T^2$ is heavy tailed, then the price for SPX has no negative moments. The results can be considered model free, as the main assumptions are (i) absence of arbitrage and (ii) $e^{-rt}S_t$ is a continuous local (super) martingale. The majority of the calculations are made under the assumption that prices are given by risk-neutral expectations, with the exception of Section \ref{sec:modelMisprice} where mispriced options are shown to be arbitrageable if they can be used to construct replicating portfolios that violate the inequality of \eqref{eq:inequality}. The paper provides a detailed application of the theory to specific models that are frequently used in SPX and VIX options pricing, with an assessment of their relative usefulness based on historically observed market behavior of these options.

The rest of the paper is organized as follows: Section \ref{sec:framework} introduces the probabilistic framework and describes how to price options on SPX and VIX; Section \ref{sec:extremeStrikes} presents the main results (including Lemma \ref{lemma:MGFbound}) and other ideas relevant to the problem; Section \ref{sec:modelMisprice} gives static arbitrage portfolios that can be implemented if the market data is mispriced in such a way so that the inequality of Lemma \ref{lemma:MGFbound} is reversed; Section \ref{sec:examples} presents various stochastic volatility models and discusses how each relates to the paper's results. Section \ref{sec:conclusions} concludes.

\section{Probabilistic Framework for Pricing}
\label{sec:framework}
Let $S_t$ denote the price of the SPX at some time $t\geq 0$. The model considered throughout this paper has an asset whose log returns are given by a stochastic volatility model,
\begin{equation}
\label{eq:returns}
d\log(S_t)=\left(r-\frac12\sigma_t^2\right)dt+ \sigma_tdW_t
\end{equation}
where $r\geq 0$ is the risk-free rate of interest, $W$ is a risk-neutral Brownian motion, and $\sigma_t$ is a volatility process that is right continuous with left-hand limits and non-anticipative of $W$. 
\begin{condition}[Finite Second Moment of Stochastic Integral]
\label{cond:int2}
For $T<\infty$, the second moment of stochastic integral $\int_0^T\sigma_udW_u$ is finite,

\[\E \left(\int_0^T\sigma_udW_u\right)^2= \E \int_0^T\sigma_u^2du<\infty\ .\]
\end{condition}
Condition \ref{cond:int2} implies that $\int_0^t\sigma_udW_u$ is a true, square integrable martingale on finite time-interval $[0,T]$, but $S_te^{-rt}$ may still be a strict local martingale. The \textit{Novikov Condition} ensures for $T<\infty$ that the process $S_te^{-rt}$ is a true martingale on $[0,T]$ if $\E e^{\frac12\int_0^T\sigma_u^2du}<\infty$. The Novikov condition is very strong and often doesn't hold for stochastic volatility models. Other conditions for exponential martingales are discussed in \cite{klebaner2012}. This paper will rely on Condition \ref{cond:int2}, will not assume Novikov, and will show the martingale property on a case-by-case basis. 

Another important condition is the existence of $S_T$'s negative moments: 
\begin{condition}[Negative Moments]
\label{cond:moments}
For $T<\infty$, there exists constant $q>0$ such that
\[\E S_T^{-q}<\infty\ .\]
\end{condition}
For $S_t$ a supermartingale, Condition \ref{cond:moments} implies existence of the MGF of $\log(S_T)$ over a finite interval containing zero,
\[\E e^{\xi\log(S_T)}<\infty\qquad\forall \xi\in[-q,1]\ .\]
Condition \ref{cond:moments} is used in \cite{lee2004} to obtain small-strike bounds on implied volatility. In particular, the supremum over all $\{q>0:\E S_T^{-q}<\infty\}$ is identified with the asymptotic rate at which implied volatility grows as strike-price goes to zero for options on $S_T$; this is part 2 of the moment formula \cite{lee2004} that will be reviewed in Section \ref{sec:momentFormulas}.

\subsection{Variance Swaps and the VIX Index}
Consider European call and put options on $S_T$ for some fixed time $T\in(0,\infty)$ and some fixed strike $K\in[0,\infty)$, both of which are processes
\begin{align*}
\Call(t,K,T)&\triangleq B_{t,T}\E _t(S_T-K)^+\\
\Put(t,K,T)&\triangleq B_{t,T}\E _t(K-S_T)^+
\end{align*} 
for some $0\leq t\leq T$, where $B_{t,T}= e^{-r(T-t)}$ is the discount factor and the expectation operator is defined as $\E _t\triangleq \E [~\cdot~|\mathcal F_t] $ with $\mathcal F_t$ denoting a $\sigma$-algebra with respect to which $W$ is Brownian motion, $\sigma_t$ is $\mathcal F_t$ adapted, and $S_0$ is $\mathcal F_0$ adapted. Throughout the paper, an expectation without a subscript is conditional at time $t=0$, that is, $\E  = \E _0$. 

\begin{definition}[SPX Implied Volatility $\hat\sigma$]
\label{def:hatSigma}
Implied volatility for SPX options is denoted with $\hat \sigma(t,K,T)$ and is the unique volatility input to the Black-Scholes prices such that
\begin{align*}
&\Put(t,K,T) =B_{t,T}\left( \Phi\left(-d_-\right)K-\Phi\left(-d_+\right)\E_tS_T\right)
\end{align*}
where $d_\pm = \frac{\log(\E_tS_T/K)}{\hat\sigma(t,K,T)\sqrt{T-t}}\pm\tfrac{\hat\sigma(t,K,T)\sqrt{T-t}}{2}$ and $\Phi$ is the standard normal cumulative distribution function. The quantity $\hat\sigma$ could be equivalently redefined using call options via the put-call parity.
\end{definition}

A variance swap for the time period $[t,T]$ with $t<T$ has a floating leg of $\frac{1}{T-t}\int_t^T\sigma_u^2du$ (equal to the quadratic variation of $\log(S_t)$ divided by time) and a fixed leg that is chosen such that the contract has zero entry cost at time $t$. This fixed leg is the \textbf{variance-swap rate:}
\begin{equation*}
\mbox{variance-swap rate}= \E _t\left[\frac{1}{T-t}\int_t^T\sigma_u^2du\right]\ .
\end{equation*}
When trading in variance swaps, an important instrument is the log contract with time-$T$ payout of $\log(S_T/\E _tS_T)$. As shown in \cite{demeterfi}, the negative log contract is replicated by a portfolio of European call and put options by taking the expectation of the identity 
\[-\log(S_T/s^*) = -\frac{S_T-s^*}{s^*}+\int_0^{s^*}\frac{(K-S_T)^+}{K^2}dK+\int_{s^*}^\infty\frac{(S_T-K)^+}{K^2}dK\ ,\]
which holds for any reference point $s^*>0$. Taking $s^*=\E _tS_{t+\tau}$ yields the VIX formula: 
\begin{equation}
\label{eq:VIX}
\VIX_t= \sqrt{\frac{2}{\tau B_{t,t+\tau}}\left(\int_0^{\E _tS_{t+\tau}}\Put(t,K,t+\tau)\frac{dK}{K^2}+\int_{\E _tS_{t+\tau}}^\infty \Call(t,K,t+\tau)\frac{dK}{K^2} \right)}\ ,
\end{equation}
where $\tau=$ 30 days. By definition, equation \eqref{eq:VIX} is the square root of the log contract's price $\VIX_t=\sqrt{-\frac{2}{\tau}\E _t\log\left(S_{t+\tau}\Big/\E _tS_{t+\tau}\right)}$. By assuming Condition \ref{cond:int2} for the continuous model in \eqref{eq:returns}, the risk-neutral price of the log contract is equal to the variance-swap rate, and hence the VIX index is the square root of the variance-swap rate for the coming 30 days,
\begin{equation}
\label{eq:noMultiplier}
\left(\hbox{Condition \ref{cond:int2}}\right)\Rightarrow \VIX_t=\sqrt{\hbox{variance-swap rate}} \ .
\end{equation}

\subsection{VIX Future and VIX Options}
Define the future contract on $\VIX_T$ at time $t\leq T$ as
\begin{equation}
\label{eq:vixFuture}
X_{t,T}=\E _t\VIX_T= \E _t\sqrt{-\frac{2}{\tau}\E _T\log\left(S_{T+\tau}\Big/\E _TS_{T+\tau}\right)}\ .
\end{equation}
The price $X_{t,T}$ is important in the VIX market because (unlike the VIX index) it is a trade-able asset. European call and put options on the VIX are the expectation of functions of $\VIX_T$, but should be thought of as options on $X_{T,T}$,
\begin{align*}
\Call^{vix}(t,K,T)&\triangleq B_{t,T} \E_t(\VIX_T-K)^+=B_{t,T}\E_t(X_{T,T}-K)^+\\
\Put^{vix}(T,K,T)&\triangleq B_{t,T}\E_t(K-\VIX_T)^+=B_{t,T}\E_t(K-X_{T,T})^+\ .
\end{align*}
Considering these options as payoffs on $X_{T,T}$ makes more clear the convention for $\Delta$-hedging VIX options with the future $X_{t,T}$. It is also the convention for VIX options to quote implied volatility by inverting the Black-Scholes formula on the VIX future, as is also done in \cite{papanicolaouSircar2014}. 

\begin{definition}[VIX Implied Volatility $\hat\nu$]
\label{def:hatNu}
Implied volatility for VIX options is denoted with $\hat \nu(t,K,T)$ and is the unique volatility input to the Black-Scholes prices such that
\begin{align*}
&\Call^{vix}(t,K,T) = B_{t,T}\left(\Phi\left(d_+\right)X_{t,T}-\Phi\left(d_-\right)K\right)\ ,
\end{align*}
where $d_\pm = \frac{\log(X_{t,T}/K)}{\hat\nu(t,K,T)\sqrt{T-t}}\pm\tfrac{\hat\nu(t,K,T)\sqrt{T-t}}{2}$ and $\Phi$ is the standard normal cumulative distribution function. The quantity $\hat \nu$ could be equivalently redefined using VIX put options via put-call parity. 
\end{definition}

Figures \ref{fig:spxImpliedVol} and \ref{fig:vixImpliedVol} show implied volatility for SPX and VIX options for September 9th of 2010, a day during the European debt crisis when options were trading with high implied volatility. Notice the right-hand skew of the $\hat\nu$ in Figure \ref{fig:vixImpliedVol}, which corresponds to volatility tail risk and is a stylistic feature of VIX options that should be captured by a stochastic volatility model that aims to price VIX options in periods of higher volatility (see \cite{drimus2012, gatheral2013,papanicolaouSircar2014}).

\begin{figure}[htbp] 
   \centering
   \includegraphics[width=4.3in]{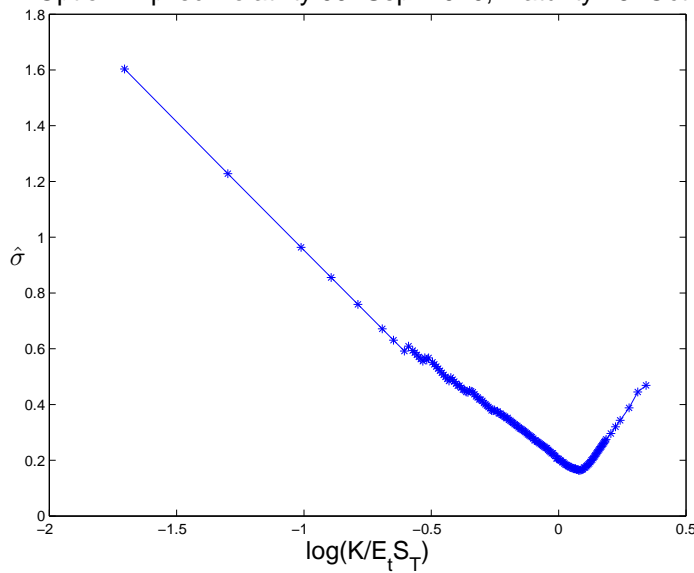} 
   \caption{Implied volatility for SPX put options on September 9th 2010 and maturity on October 16th 2010. The SPX future price is $\E _tS_T = 1101.97$ and the risk-free rate is approximately $r=.28\%$. The left-hand skew observed in this figure has been a common sight since the late 1980's. Stochastic volatility models (such as the Heston model) and various L\'evy models have successfully fit this skew. However, VIX options introduce another skew that is a derivative of the SPX skew (see Figure \ref{fig:vixImpliedVol}), and this new skew has forced the re-evaluation of standards in stochastic volatility models.}
   \label{fig:spxImpliedVol}
\end{figure}

\begin{figure}[htbp] 
   \centering
   \includegraphics[width=4.3in]{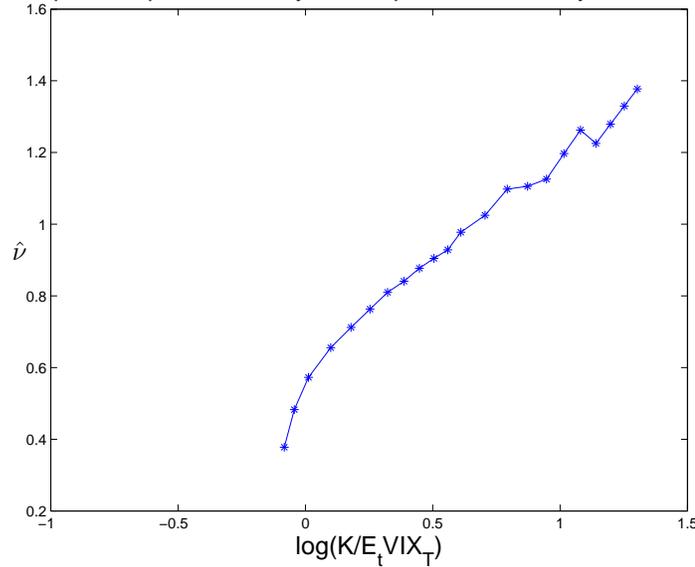} 
   \caption{Implied volatility for VIX call options on September 10th 2010, maturity on October 20th 2010. The VIX future price is $X_{t,T}=\E _t\VIX_T = 27.75\%$. Data from this day highlights the commonly-observed right skew from VIX options. This is the stylistic feature that is identified in the literature (see \cite{drimus2012, gatheralSlides,papanicolaouSircar2014}). Right skew will affect how selection of a stochastic volatility model is made, because it suggests that heavy-tailed volatility models (such as the 3/2 model or double log-normal model in \cite{gatheral2013}) are better-suited than the light-tailed Heston model.}
   \label{fig:vixImpliedVol}
\end{figure}

\section{Extreme-Strike Asymptotics}
\label{sec:extremeStrikes}
Results that are considered model free will usually require some assumptions, such as the VIX being finite almost surely, a condition that is ensured by Condition \ref{cond:int2}. Indeed, Condition \ref{cond:int2} is the key assumption for the square of the VIX formula in equation \eqref{eq:VIX} to be equal to the variance-swap rate, but it turns out that the negative moments described in Condition \ref{cond:moments} are sufficient to imply Condition \ref{cond:int2}. The following proposition proves this statement and is the first instance in this paper where a connection is identified between the VIX and negative moments in SPX:

\begin{proposition} 
\label{prop:negMoments_squareIntegrability}
Let $S_tB_{0,t}$ be a supermartingale on $[0,T]$ with $\int_0^T\sigma_t^2dt<\infty$ a.s. and satisfying Condition \ref{cond:moments}. Then Condition \ref{cond:int2} holds. 

\end{proposition}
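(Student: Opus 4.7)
The strategy is to exploit the SDE for $\log S_t$ to convert the negative-moment information in Condition \ref{cond:moments} into exponential integrability of $\int_0^T\sigma_u^2du$. The device for this is to multiply $S_T^{-q}$ by an appropriate stochastic exponential so that the martingale part cancels, leaving only a deterministic function of $\int_0^T\sigma_u^2du$; Cauchy--Schwarz then transfers integrability from $S_T^{-q}$ to that bracket process.

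First I would reduce to the case $q\in(0,1)$. Because $\|\cdot\|_{p}$ is nondecreasing in $p$ under a probability measure, $\E S_T^{-q}<\infty$ for one $q>0$ implies the same for every smaller exponent, so we may shrink $q$ below $1$ freely. The assumption $\int_0^T\sigma_u^2du<\infty$ a.s.\ guarantees that $M_t\triangleq\int_0^t\sigma_u\,dW_u$ is a well-defined continuous local martingale with $\langle M\rangle_t=\int_0^t\sigma_u^2du$, and the SDE \eqref{eq:returns} rearranges to $S_t=S_0e^{rt}\mathcal{E}(M)_t$ where $\mathcal{E}(M)_t=\exp\!\bigl(M_t-\tfrac12\langle M\rangle_t\bigr)$. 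Thus
\[
S_T^{-q}=\bigl(S_0e^{rT}\bigr)^{-q}\exp\!\Bigl(-qM_T+\tfrac{q}{2}\langle M\rangle_T\Bigr).
\]

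Next I would introduce the auxiliary stochastic exponential $Y_t\triangleq\mathcal{E}(qM)_t=\exp\!\bigl(qM_t-\tfrac{q^2}{2}\langle M\rangle_t\bigr)$. As a nonnegative local martingale starting at $1$, $Y$ is automatically a supermartingale, so $\E Y_T\le 1$. Direct multiplication gives the clean identity
\[
S_T^{-q}\,Y_T=\bigl(S_0e^{rT}\bigr)^{-q}\exp\!\Bigl(\tfrac{q(1-q)}{2}\langle M\rangle_T\Bigr),
\]
after which taking square roots and applying Cauchy--Schwarz yields
\[
\bigl(S_0e^{rT}\bigr)^{-q/2}\,\E\exp\!\Bigl(\tfrac{q(1-q)}{4}\langle M\rangle_T\Bigr)=\E\!\bigl[S_T^{-q/2}Y_T^{1/2}\bigr]\le\bigl(\E S_T^{-q}\bigr)^{1/2}\bigl(\E Y_T\bigr)^{1/2}<\infty.
\]
Since $q(1-q)/4>0$, this exhibits a positive exponential moment of $\int_0^T\sigma_u^2du$, which immediately implies $\E\int_0^T\sigma_u^2du<\infty$, i.e.\ Condition \ref{cond:int2}.

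The main obstacle I anticipate is the restriction $q\in(0,1)$: for $q\ge 1$ the exponent $q(1-q)/2$ is nonpositive and the pairing above yields no useful bound, so the $L^{p}$ reduction is not cosmetic but is the step that genuinely exploits Condition \ref{cond:moments}. The hypothesis that $S_tB_{0,t}$ is a supermartingale is used only implicitly, since $\mathcal{E}(M)$ is a positive local martingale and hence a supermartingale automatically; the real content is the algebraic identity $\mathcal{E}(M)^{-q}\mathcal{E}(qM)=\exp\!\bigl(\tfrac{q(1-q)}{2}\langle M\rangle\bigr)$ combined with $\E\mathcal{E}(qM)_T\le 1$.
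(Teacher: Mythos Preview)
Your argument is correct and considerably more direct than the paper's. The paper localizes with stopping times, rewrites $\tfrac12\E\int_0^{T\wedge\mathcal T_n}\sigma_t^2dt$ as a log-contract, expands this via the Carr--Madan decomposition into integrals of put and call prices, and then controls the put integral using the Lee-type estimate $\E(K-S)^+\le\frac{\E S^{-q}}{q+1}(\tfrac{q}{q+1})^qK^{1+q}$ together with a convexity/supermartingale argument to pass from $\E S_{T\wedge\mathcal T_n}^{-q}$ to $\E S_T^{-q}$. Your route avoids all of that machinery: the single algebraic identity $\mathcal E(M)^{-q}\mathcal E(qM)=\exp\bigl(\tfrac{q(1-q)}{2}\langle M\rangle\bigr)$ plus Cauchy--Schwarz and the supermartingale bound $\E\mathcal E(qM)_T\le 1$ do the job in one step. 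Moreover, you actually obtain a strictly stronger conclusion---a positive exponential moment $\E\exp\bigl(c\int_0^T\sigma_u^2du\bigr)<\infty$---whereas the paper's argument only delivers the first moment required by Condition \ref{cond:int2}. Your observation that the supermartingale hypothesis on $S_tB_{0,t}$ is redundant is also correct: under the SDE \eqref{eq:returns} one has $S_tB_{0,t}=S_0\mathcal E(M)_t$, a nonnegative local martingale automatically. The paper's proof does invoke the supermartingale property explicitly (in the step bounding $\E S_{T\wedge\mathcal T_n}^{-q}$), but this is not needed in your approach.
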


\begin{proof}
The stochastic integral $\int_0^t\sigma_udW_u$ is a local martingale, and so there exists a family of increasing stopping times $(\mathcal T_n^1)_{n=1,2,3,\dots}$ such that $\int_0^{t\wedge\mathcal T_n^1}\sigma_udW_u$ is a martingale for all $n<\infty$ and $\mathcal T_n^1\wedge t\nearrow t$ a.s. as $n\rightarrow \infty$. There also exists an increasing family $(\mathcal T_n^2)_{n=1,2,3,\dots}$ such that $\int_0^{t\wedge\mathcal T_n^2}\sigma_uS_udW_u$ is a martingale for all $n<\infty$, and $\mathcal T_n^2\wedge t\nearrow t$ a.s. as $n\rightarrow \infty$, and hence $\E S_{t\wedge\mathcal T_n^2}=S_0\E1/ B_{0,t\wedge\mathcal T_n^2}$. Then defining $\mathcal T_n=\mathcal T_n^1\wedge\mathcal T_n^2$, the expected variance of the stopped process can be bounded from above,
\begin{align*}
\frac12\E\int_0^{T\wedge\mathcal T_n}\sigma_t^2dt 
&=\E\int_0^{T\wedge\mathcal T_n}\frac{dS_t}{S_t}-\log(\E S_{T\wedge \mathcal T_n}/ S_0)  -\E\log(S_{T\wedge\mathcal T_n}/\E S_{T\wedge \mathcal T_n})\\
&=r\E\int_0^{T\wedge\mathcal T_n}dt -\log(\E e^{r(T\wedge \mathcal T_n)})  \\
&~~~~~+\int_0^{\E S_{T\wedge\mathcal T_n}}\frac{\E(K-S_{T\wedge\mathcal T_n})^+}{K^2}dK+\int_{\E S_{T\wedge\mathcal T_n}}^\infty\frac{\E(S_{T\wedge\mathcal T_n}-K)^+}{K^2}dK\\
&\leq \underbrace{r\E\int_0^{T\wedge\mathcal T_n}dt -r\E (T\wedge \mathcal T_n)}_{=0}  \\
&~~~~~+\int_0^{\E S_{T\wedge\mathcal T_n}}\frac{\E(K-S_{T\wedge\mathcal T_n})^+}{K^2}dK+\int_{\E S_{T\wedge\mathcal T_n}}^\infty\frac{\E(S_{T\wedge\mathcal T_n}-K)^+}{K^2}dK\ ,
\end{align*}
where the last line comes from Jensen's inequality. Using the bound $\E(S_{T\wedge\mathcal T_n}-K)^+\leq \E S_{T\wedge\mathcal T_n}$, the above quantity can be further estimated,
\begin{align*}
\frac12\E\int_0^{T\wedge\mathcal T_n}\sigma_t^2dt &\leq \int_0^{\E S_{T\wedge\mathcal T_n}}\frac{\E(K-S_{T\wedge\mathcal T_n})^+}{K^2}dK+\E S_{T\wedge\mathcal T_n}\int_{\E S_{T\wedge\mathcal T_n}}^\infty\frac{1}{K^2}dK\\
&=\int_0^{\E S_{T\wedge\mathcal T_n}}\frac{\E(K-S_{T\wedge\mathcal T_n})^+}{K^2}dK+1\ .
\end{align*}
On the left-hand side $\lim_{n\rightarrow\infty}\E\int_0^{T\wedge\mathcal T_n}\sigma_t^2dt=\E\int_0^T\sigma_t^2dt$ by monotone convergence theorem. On the right-hand side a bound comes from the estimate $\E_t(K-S_{T\wedge\mathcal T_n})^+ \leq \frac{\E _tS_{T\wedge\mathcal T_n}^{-q}}{q+1}\left(\frac{q}{q+1}\right)^{q}K^{1+q}$
for all $K\in[0,\infty)$ and $q>0$ (see \cite{lee2004}), along with the convexity inequality 
\[\frac{1}{(B_{T\wedge\mathcal T_n,T}S_T)^q} \geq \frac{1}{S_{T\wedge\mathcal T_n}^q} - q \frac{B_{T\wedge\mathcal T_n,T}S_T-S_{T\wedge\mathcal T_n}}{S_{T\wedge\mathcal T_n}^{q+1}}\ ,\]
with the expectation being bounded,
\begin{align*}
\E S_{T\wedge\mathcal T_n}^{-q} &\leq \E\left[(B_{T\wedge\mathcal T_n,T}S_T)^{-q} + q (B_{T\wedge\mathcal T_n,T}S_T-S_{T\wedge\mathcal T_n})S_{T\wedge\mathcal T_n}^{-(q+1)}\right]\\
&=\E\left[(B_{T\wedge\mathcal T_n,T}S_T)^{-q} + q \underbrace{\E\left[(B_{T\wedge\mathcal T_n,T}S_T-S_{T\wedge\mathcal T_n})\Big|\mathcal F_{T\wedge\mathcal T_n}\right]}_{\leq 0~~\hbox{supermartingale}}S_{T\wedge\mathcal T_n}^{-(q+1)}\right]\\
&\leq B_{0,T}^{-q}\E S_T^{-q}\ ,
\end{align*}
and hence there is the bound,
\begin{align*}
\int_0^{\E S_{T\wedge\mathcal T_n}}\frac{\E_t(K-S_{T\wedge\mathcal T_n})^+}{K^2}&\leq\frac{B_{0,T}^{-q}\E _tS_T^{-q}}{q+1}\left(\frac{q}{q+1}\right)^{q}\int_0^{\E S_{T\wedge\mathcal T_n}}K^{-1+q}dK\\
&\leq  \frac{B_{0,T}^{-q}\E _tS_T^{-q}}{q(q+1)}\left(\frac{q}{q+1}\right)^{q}(S_0/B_{0,T})^q\ ,
\end{align*}
which is finite for some $q>0$ by Condition \ref{cond:moments}. Hence there is the inequality
\begin{align*}
&\frac12\E\int_0^T\sigma_t^2dt
\leq \frac{\E S_T^{-q}}{q(q+1)}\left(\frac{q}{q+1}\right)^q(S_0/B_{0,T}^2)^q+1 \ ,
\end{align*}
and if $\E S_T^{-q}<\infty$ then Condition \ref{cond:int2} holds. 
\end{proof}

Another instance where finite SPX moments are important is in determining the existence of the MGF of $\VIX_T^2$. The following lemma will be used throughout the rest of the paper:

\begin{lemma} 
\label{lemma:MGFbound}
Let $S_tB_{0,t}$ be a supermartingale on $[0,T+\tau]$ satisfying Condition \ref{cond:int2}. For any $\xi\in\mathbb R$, the MGF $\E _te^{\xi \smallVIX_T^2} $ satisfies the inequality
\begin{equation}
\label{eq:MGFbound}
\E _te^{\xi \smallVIX_T^2} \leq\frac1q\E _t\left(S_{T+\tau}\right)^{-\frac{2\xi q}{\tau}}+\frac1p\E _t\left(S_T/B_{T,T+\tau}\right)^{\frac{2\xi p}{\tau}}\qquad\forall t\leq T\ ,
\end{equation}
where $p>1$ and $q>1$ with $\frac1p+\frac1q=1$ (i.e. $p$ and $q$ are conjugate exponents). This is a strict inequality if $\E_t\left(\frac{S_{T+\tau}}{\E_TS_{T+\tau}}-1\right)^2>0$.
\end{lemma}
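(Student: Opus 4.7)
My plan is to combine the log-contract representation of $\VIX_T^2$ with a pair of Jensen inequalities and Young's inequality for conjugate exponents. By Condition \ref{cond:int2} and \eqref{eq:noMultiplier}, one has $\VIX_T^2 = -\tfrac{2}{\tau}\E_T\log(S_{T+\tau}/\E_TS_{T+\tau})$, so $\xi\VIX_T^2$ is itself a conditional expectation under $\mathcal F_T$. Applying conditional Jensen to the convex function $\exp$, and then pulling the $\mathcal F_T$-measurable factor out of $\E_T$, gives
\[e^{\xi\VIX_T^2}\leq\E_T\Bigl(\frac{\E_TS_{T+\tau}}{S_{T+\tau}}\Bigr)^{2\xi/\tau}=(\E_TS_{T+\tau})^{2\xi/\tau}\,\E_TS_{T+\tau}^{-2\xi/\tau}.\]

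I would then apply Young's inequality $ab\leq a^p/p+b^q/q$ to this product of positive $\mathcal F_T$-measurable factors, and tidy up the resulting $(\E_TS_{T+\tau}^{-2\xi/\tau})^q$ by a second, routine application of Jensen to the convex power $x\mapsto x^q$, which turns it into $\E_TS_{T+\tau}^{-2\xi q/\tau}$. For $\xi\geq 0$, the supermartingale bound $\E_TS_{T+\tau}\leq S_T/B_{T,T+\tau}$ combined with monotonicity of $x\mapsto x^{2\xi p/\tau}$ upgrades the remaining $(\E_TS_{T+\tau})^{2\xi p/\tau}$ to $(S_T/B_{T,T+\tau})^{2\xi p/\tau}$. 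Taking $\E_t$ of the resulting pathwise inequality and using the tower property $\E_t\E_T=\E_t$ then delivers \eqref{eq:MGFbound}. For the strict-inequality claim, the hypothesis $\E_t(S_{T+\tau}/\E_TS_{T+\tau}-1)^2>0$ forces $S_{T+\tau}/\E_TS_{T+\tau}$ to differ from $1$ on an event of positive $\E_t$-probability, so (for $\xi\neq 0$) strict convexity of $\exp$ makes the first Jensen step strict there, and the strictness propagates through the remaining monotone operations to the outer $\E_t$.

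The main obstacle is algebraic bookkeeping rather than deep analysis: Young's must be paired so that the right-hand exponents come out as $-2\xi q/\tau$ and $2\xi p/\tau$ with coefficients $1/q$ and $1/p$, and one must watch the sign of $\xi$ in the supermartingale step. The chain above is orientation-preserving for $\xi\geq 0$; for $\xi<0$ one effectively needs $S_tB_{0,t}$ to be a true martingale so that $(\E_TS_{T+\tau})^{2\xi p/\tau}=(S_T/B_{T,T+\tau})^{2\xi p/\tau}$ and the remaining steps go through unchanged.
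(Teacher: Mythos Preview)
Your proof is correct and follows essentially the same route as the paper: log-contract representation of $\VIX_T^2$, conditional Jensen for $\exp$, Young's inequality, and the supermartingale bound. The only difference is bookkeeping: the paper applies Young's \emph{pointwise} to the factors $a=S_{T+\tau}^{-2\xi/\tau}$ and $b=(\E_TS_{T+\tau})^{2\xi/\tau}$ before taking any expectation, which immediately yields $\tfrac1q S_{T+\tau}^{-2\xi q/\tau}+\tfrac1p(\E_TS_{T+\tau})^{2\xi p/\tau}$ and avoids your second Jensen step on $x\mapsto x^q$. Your observation that the supermartingale step as written only furnishes the stated inequality for $\xi\geq 0$ (and needs the true-martingale property for $\xi<0$) is accurate and is a point the paper does not address explicitly.
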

\begin{proof}
From Jensen's and Young's inequality, 
\begin{align*}
\E _te^{\xi \smallVIX_T^2} &=\E _te^{-\frac{2\xi}{\tau}\E _T\log(S_{T+\tau}/\E_TS_{T+\tau})}\\
&\leq\E _te^{\log\left(\left(\frac{S_{T+\tau}}{\E_TS_{T+\tau}}\right)^{-\frac{2\xi}{\tau}}\right)}\hspace{2cm}\hbox{(Jensen's inequality)}\\
&=\E _t\left(\frac{S_{T+\tau}}{\E_TS_{T+\tau}}\right)^{-\frac{2\xi}{\tau}}\\
&\leq\frac1q\E _t\left(S_{T+\tau}\right)^{-\frac{2\xi q}{\tau}}+\frac1p\E _t\left(\frac{1}{\E_TS_{T+\tau}}\right)^{-\frac{2\xi p}{\tau}}\hspace{.1cm}\hbox{(Young's inequality)}\\
&\leq\frac1q\E _t\left(S_{T+\tau}\right)^{-\frac{2\xi q}{\tau}}+\frac1p\E _t\left(S_T/B_{T,T+\tau}\right)^{\frac{2\xi p}{\tau}}\ .
\end{align*}
In this case, Jensen's inequality is an equality iff the random variable has zero variance (i.e. if $\E_t\left(\frac{S_{T+\tau}}{\E_TS_{T+\tau}}-1\right)^2=0$), and hence the inequality is strict in non-degenerate cases.
\end{proof}

Lemma \ref{lemma:MGFbound} is a useful tool when evaluating the market for VIX options, primarily because it shows how existence of a negative moment $\E _tS_{T+\tau}^{-q}$ for some $q>0$ implies that the VIX-squared process is not heavy tailed. Conversely, if the MGF $\E _te^{\xi \smallVIX_T^2} =\infty$ for all $\xi>0$, then $\E_tS_{T+\tau}^{-q}=\infty $ for all $q>0$. In both cases, for $\frac{2\xi p}{\tau}\leq 1$ the (super) martingale property ensures $\E _t\left(S_T/B_{T,T+\tau}\right)^{\frac{2\xi p}{\tau}}<\infty$, so that $\E _t\left(S_{T+\tau}\right)^{-\frac{2\xi q}{\tau}}<\infty$ implies $\E _te^{\xi \smallVIX_T^2} <\infty$, and $\E _te^{\xi \smallVIX_T^2} =\infty$ implies $\E _t\left(S_{T+\tau}\right)^{-\frac{2\xi q}{\tau}}=\infty$.

\subsection{Moment Formulas}
\label{sec:momentFormulas}
The Moment Formula from \cite{lee2004} consists of parts 1 and part 2 describing the right and left tail, respectively, of the implied volatility smile. Let the price process $S_t$ be a martingale and define $\tilde p \triangleq \sup\{p\geq0|\E _tS_T^{1+p}<\infty\}$. Part 1 states that

\begin{equation}
\label{eq:moment1}
 \limsup_{K\nearrow\infty}\frac{\hat\sigma^2(t,K,T)}{\log(K/\E_tS_T)/(T-t)}=\beta_R\in[0,2]\ ,
 \end{equation}
 where $\tilde p = \frac{1}{2\beta_R}+\frac{\beta_R}{8}-\frac12$. Equivalently, $\beta_R = 2-4\left(\sqrt{\tilde p^2+\tilde p}-\tilde p\right)$ with $\beta_R=0$ when $\tilde p=\infty$. Next define $\tilde q \triangleq \sup\{q\geq0|\E _tS_T^{-q}<\infty\}$. Part 2 states that
\begin{equation}
\label{eq:moment2}
\limsup_{K\searrow 0}\frac{\hat\sigma^2(t,K,T)}{\log(K/\E_tS_T)/(T-t)}=\beta_L \in[0,2]
\end{equation}
where $\tilde q = \frac{1}{2\beta_L}+\frac{\beta_L}{8}-\frac12$. Equivalently, $\beta_L = 2-4\left(\sqrt{\tilde q^2+\tilde q}-\tilde q\right)$ with $\beta_L=0$ when $\tilde q=\infty$. Parts 1 and 2 both take $1/0\triangleq\infty$. For many models the limit supremum in \eqref{eq:moment1} and \eqref{eq:moment2} can be replaced with a proper limit (see \cite{friz2008}). Figure \ref{fig:prelim_momentFormula} shows how the moment formulas apply to the data with $\beta_L$ and $\beta_R$ estimated from the most extreme strikes in the September 2010 options data seen Figures \ref{fig:spxImpliedVol} and \ref{fig:vixImpliedVol}.
\begin{figure}[htbp] 
\centering
\begin{tabular}{cc}
   \includegraphics[width=2.9in]{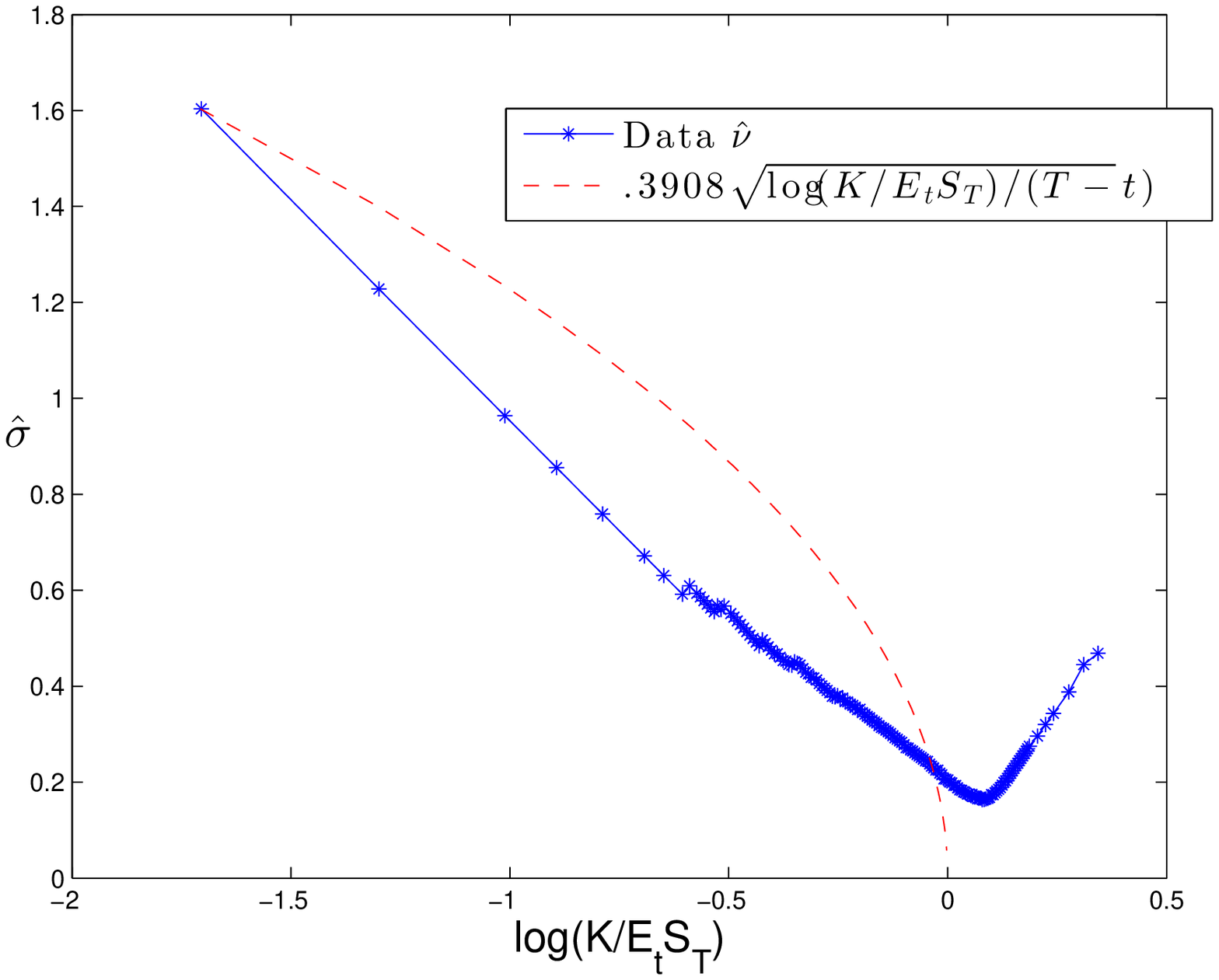} &
   \includegraphics[width=2.9in]{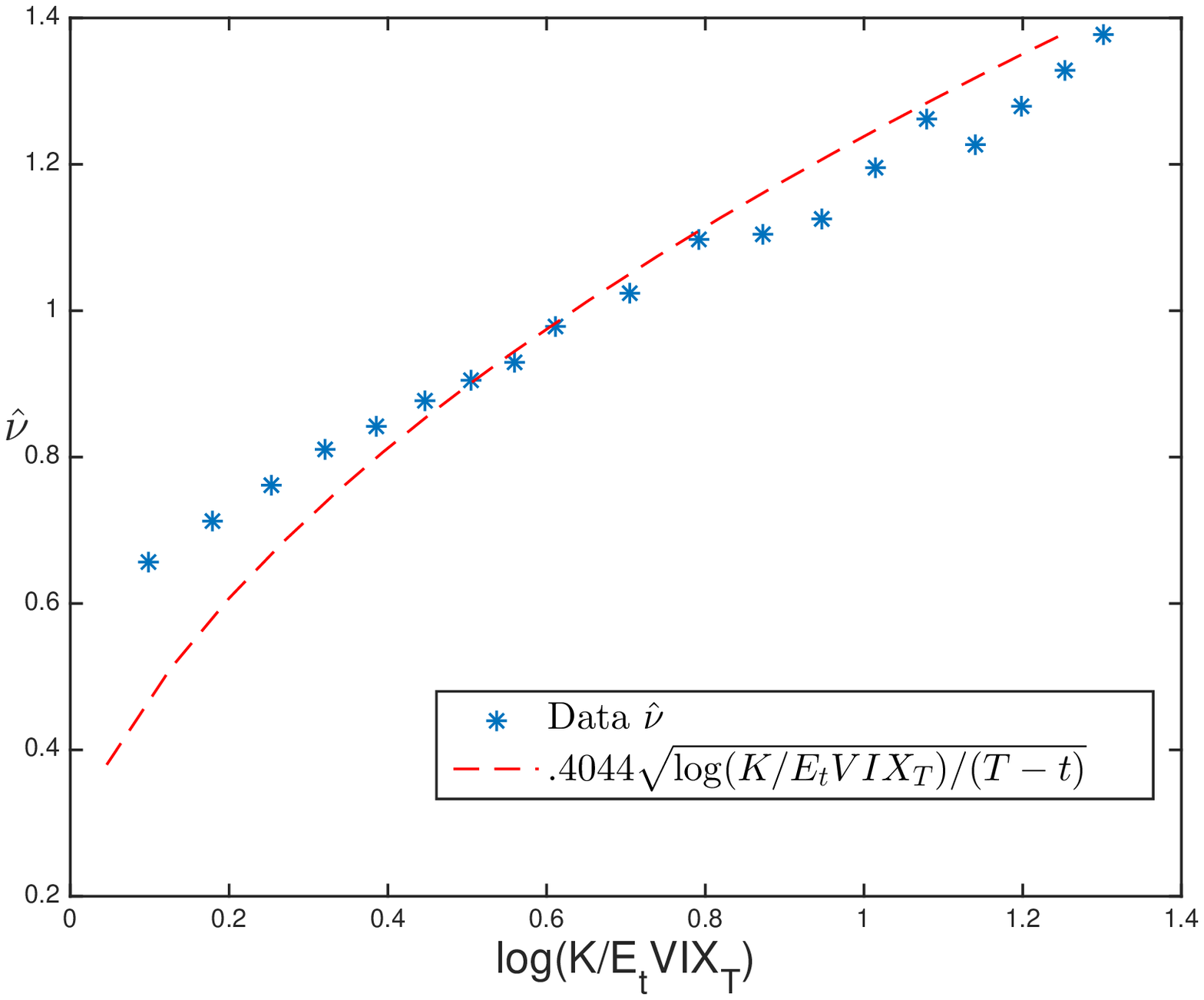} 
   \end{tabular}
   \caption{\textbf{Left:} From the lowest-strike SPX put option there is the estimate $\beta_L\geq 0.3908$, which is the coefficient of the dashed line that is the extreme-strike asymptotic from the moment formula. \textbf{Right:} From the highest-strike VIX option there is the estimate $\sqrt{\beta_R^{vix}}\geq .4044$.}
   \label{fig:prelim_momentFormula}
\end{figure}

Moment formulas can be used to show how moment explosion in the VIX options market affects implied volatility in SPX options. Define $\tilde p^{vix} = \sup\left\{p>0\Big|\E _t\VIX_T^{1+p}<\infty\right\}$.
If $\tilde p^{vix}<\infty$ then the MGF $\E _te^{\xi \smallVIX_T^2} =\infty$ for all $\xi>0$ and by Lemma \ref{lemma:MGFbound} it follows that $\E _tS_{T+\tau}^{-q}=\infty$ for all $q>0$. Hence by part 2 of the moment formula, as stated in equation \eqref{eq:moment2}, it follows that $\beta_L=2$ for SPX options with exercise at time $T+\tau$, and the implied volatility limit is at its maximum, $\limsup_{K\searrow 0}\frac{\hat \sigma^2(t,T+\tau,K)}{\log(K/\E_tS_{T+\tau})/(T+\tau-t)}=2$.

Similarly, moments of SPX's distribution can say something about implied volatility of VIX options. Define $\beta_R^{vix} =  \limsup_{K\nearrow\infty}\frac{\hat\nu^2(t,K,T)}{\log(K/\E_tS_T)/(T-t)}$, and suppose $\tilde q = \sup\left\{q>0\Big|\E _tS_{T+\tau}^{-q}<\infty\right\}>0$. Then from Lemma \ref{lemma:MGFbound} it follows that $\E _te^{\xi\VIX_T^2}<\infty$ for some $\xi>0$. Moreover, if $\VIX_T^2$ has finite MGF for positive $\xi$ then $\E _t\VIX_T^n<\infty$ for all $n>0$, and then by equation \eqref{eq:moment1} it follows that $\beta_R^{vix}=0$, giving the extreme-strike asymptotic $\limsup_{K\nearrow \infty}\frac{\hat \nu^2(t,T,K)}{\log(K/X_{t,T})/(T-t)}=0$. Moreover, finite MGF for $\VIX_T^2$ for positive $\xi$ is equivalent to saying that the VIX-squared does not have a heavy-tailed distribution, and hence $\tilde q>0$ implies $\VIX_T$ does not have heavy tails.

 More generally, the moment formula and Lemma \ref{lemma:MGFbound} are used to show how implied volatility from VIX options gives a lower bound on the implied volatilities of SPX options.  
 \begin{proposition}
 \label{prop:impVolLowerBound}
Assume Condition \ref{cond:int2} and let 
 \[\tilde \xi=\sup\left\{\xi\geq0\Big|\E _te^{\xi \smallVIX_T^2} <\infty\right\}.\]
 \begin{enumerate}
 \item
If $\tilde\xi<\infty$ and $\E _tS_T^{\frac{4\tilde \xi}{\tau}}<\infty$, then 
\[\tilde q = \sup\left\{q\geq0\Big|\E _tS_{T+\tau}^{-q}<\infty\right\}\leq\frac{4\tilde \xi}{\tau} ,\]
and 
\begin{align*}
&\limsup_{K\searrow 0}\frac{\hat\sigma^2(t,K,T+\tau)}{\log(K/\E_tS_{T+\tau})/(T+\tau-t)}=\beta_L\geq 2-4\left(\sqrt{\left(\frac{4\tilde \xi}{\tau}\right)^2+\frac{4\tilde \xi}{\tau}}-\frac{4\tilde \xi}{\tau}\right)\geq0\ .
\end{align*}
\item
If $\tilde\xi<\infty$ and $\E _tS_{T+\tau}^{\frac{-4\tilde \xi}{\tau}}<\infty$, then $\frac{4\tilde \xi}{\tau}\geq1$ and 
\[\tilde p = \sup\left\{p\geq0\Big|\E _tS_T^{1+p}<\infty\right\}\leq\frac{4\tilde \xi}{\tau}-1 ,\]
and 
\begin{align*}
&\limsup_{K\nearrow \infty}\frac{\hat\sigma^2(t,K,T)}{\log(K/\E_tS_T)/(T-t)}=\beta_R\\
&\hspace{3cm}\geq 2-4\left(\sqrt{\left(\frac{4\tilde \xi}{\tau}-1\right)^2+\frac{4\tilde \xi}{\tau}-1}-\left(\frac{4\tilde \xi}{\tau}-1\right)\right)\\
&\hspace{3cm}= -2-4\left(\sqrt{\left(\frac{4\tilde \xi}{\tau}\right)^2-\frac{4\tilde \xi}{\tau}}-\frac{4\tilde \xi}{\tau}\right)\geq0\ .
\end{align*}
\end{enumerate}
 \end{proposition}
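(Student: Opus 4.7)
The plan is to combine Lemma \ref{lemma:MGFbound} with Lee's moment formulas \eqref{eq:moment1}--\eqref{eq:moment2} via a proof-by-contradiction template. In each part, I would suppose the stated SPX moment bound fails, then construct an $\xi > \tilde\xi$ for which Lemma \ref{lemma:MGFbound} gives $\E_t e^{\xi \smallVIX_T^2} < \infty$, contradicting the definition of $\tilde\xi$. The implied-volatility lower bound then follows from the relevant half of the moment formula together with the (easy) monotonicity of the map $y\mapsto 2-4(\sqrt{y^2+y}-y)$ on $[0,\infty)$, which is strictly decreasing because $(2y+1)^2>4(y^2+y)$.

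For Part 1, I would assume for contradiction that there exists $q'>4\tilde\xi/\tau$ with $\E_t S_{T+\tau}^{-q'}<\infty$, pick $\xi>\tilde\xi$, and choose the H\"older exponents at the boundaries allowed by the hypotheses: set $q=\tau q'/(2\xi)$ so that $2\xi q/\tau=q'$ controls the negative-moment term in \eqref{eq:MGFbound}, and the conjugate relation forces $p=\tau q'/(\tau q'-2\xi)$. The second summand is then finite whenever $2\xi p/\tau\leq 4\tilde\xi/\tau$, using the hypothesis $\E_t S_T^{4\tilde\xi/\tau}<\infty$; this reduces to the single constraint $\xi\leq 2\tilde\xi\tau q'/(\tau q'+4\tilde\xi)$. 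The strict inequality $q'>4\tilde\xi/\tau$ is exactly what makes this upper bound strictly larger than $\tilde\xi$, so an admissible $\xi>\tilde\xi$ exists and the contradiction follows. This yields $\tilde q\leq 4\tilde\xi/\tau$, and feeding this into \eqref{eq:moment2} via the monotonicity noted above gives the claimed lower bound on $\beta_L$.

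Part 2 is structurally symmetric but swaps the roles of the two summands in \eqref{eq:MGFbound}: the negative-moment hypothesis now controls the first term, and positive moments in the second term are supplied by $\tilde p$ together with the supermartingale bound $\E_t S_T^a<\infty$ for all $a\leq 1$. I would first dispose of the assertion $4\tilde\xi/\tau\geq 1$: if it failed, setting $q=2\tilde\xi/\xi$ and merely requiring $2\xi p/\tau\leq 1$ would still produce an admissible $\xi>\tilde\xi$ satisfying $\xi\leq 2\tau\tilde\xi/(4\tilde\xi+\tau)$, again contradicting the supremum. Then, assuming for contradiction there exists $p'>4\tilde\xi/\tau-1$ with $\E_t S_T^{1+p'}<\infty$, the same template with $q=2\tilde\xi/\xi$ and $2\xi p/\tau\leq 1+p'$ yields $\xi\leq 2\tau(1+p')\tilde\xi/(4\tilde\xi+\tau(1+p'))$, and the hypothesis $1+p'>4\tilde\xi/\tau$ again makes this upper bound exceed $\tilde\xi$. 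This gives $\tilde p\leq 4\tilde\xi/\tau-1$, hence the $\beta_R$ bound via \eqref{eq:moment1}; the two algebraic forms displayed in the proposition match via $(y-1)^2+(y-1)=y^2-y$ with $y=4\tilde\xi/\tau$.

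The main obstacle is the bookkeeping of the H\"older optimization in Lemma \ref{lemma:MGFbound}: one has to see that assigning each moment exponent its maximal hypothesis-allowed value is the right choice and that, after enforcing $1/p+1/q=1$, the single remaining constraint on $\xi$ degenerates to the strict inequality of the hypothesis precisely at $\xi=\tilde\xi$. A minor point is that in both parts one should work with $q'$ (resp.\ $p'$) strictly less than $\tilde q$ (resp.\ $\tilde p$) so that the chosen SPX moments are actually finite by definition of the supremum (with finiteness at intermediate exponents justified by Jensen's inequality applied to $x\mapsto x^{q/q''}$ for $q<q''$); passing to the sup at the end is then immediate.
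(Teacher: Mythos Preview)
Your proposal is correct and follows essentially the same approach as the paper: apply Lemma \ref{lemma:MGFbound} to tie the VIX-squared MGF to SPX moments, then invoke the monotonicity of $q\mapsto 2-4(\sqrt{q^2+q}-q)$ together with the moment formulas \eqref{eq:moment1}--\eqref{eq:moment2}. The only difference is cosmetic: where you argue by contradiction and optimize the H\"older exponents, the paper proceeds directly by taking $\xi=\tilde\xi(1+\epsilon)$ with the symmetric choice $p=2/(1\pm\epsilon)$, $q=2/(1\mp\epsilon)$ (so that one exponent is exactly $4\tilde\xi/\tau$ and the other is $\tfrac{4\tilde\xi(1+\epsilon)}{\tau(1-\epsilon)}$), and then lets $\epsilon\searrow 0$---a cleaner parameterization that avoids your bookkeeping entirely.
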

 \begin{proof}Part 1: If $\tilde\xi<\infty$ then for any $\epsilon\in(0,1)$ it is the case that $\E _te^{\tilde \xi(1+\epsilon) \smallVIX_T^2} =\infty$. Using Lemma \ref{lemma:MGFbound} with $p=2/(1+\epsilon)$ and $q=2/(1-\epsilon)$, if $\E _tS_T^{\frac{4\tilde \xi}{\tau}}<\infty$ it then follows that $\E _tS_{T+\tau}^{-\frac{4\tilde \xi(1+\epsilon)}{\tau(1-\epsilon)}}=\infty$ and therefore $\tilde q = \sup\{q\geq0|\E _tS_{T+\tau}^{-q}<\infty\}< \frac{4\tilde \xi(1+\epsilon)}{\tau(1-\epsilon)}$ for all $\epsilon\in(0,1)$, and hence $\tilde q\leq \frac{4\tilde \xi}{\tau}$. Now define the function 
 \[f(q) = 2-4\left(\sqrt{q^2+q}-q\right)\]
 and notice that $f'(q)<0$ for all $q>0$, so that in equation \eqref{eq:moment2} the limit supremum is $\beta_L=f(\tilde q)\geq f\left(\frac{4\tilde \xi}{\tau}\right)$.
 
Part 2: If $\tilde\xi<\infty$ then for any $\epsilon\in(0,1)$ it is the case that $\E _te^{\tilde \xi(1+\epsilon) \smallVIX_T^2} =\infty$. Using Lemma \ref{lemma:MGFbound} with $p=2/(1-\epsilon)$ and $q=2/(1+\epsilon)$, if $\E _tS_{T+\tau}^{\frac{-4\tilde \xi}{\tau}}<\infty$ then it follows that $\E _tS_T^{\frac{4\tilde \xi(1+\epsilon)}{\tau(1-\epsilon)}}=\infty$, therefore $1+\tilde p< \frac{4\tilde \xi(1+\epsilon)}{\tau(1-\epsilon)}$ for all $\epsilon\in(0,1)$, and hence $\tilde p\leq \frac{4\tilde \xi}{\tau}-1$. The remainder of the proof is similar to the argument used in Part 1, except with an application of equation \eqref{eq:moment1}.
 \end{proof}

\subsection{Replication of VIX MGF with VIX Options}
\label{sec:hedging}

From a market of VIX options with a continuum of strikes comes a tremendous amount of information about the SPX. In particular, via the Breeden and Litzenberger formula \cite{breedenLitzenberger1978} one obtains a risk-neutral distribution on the portfolio of calls and puts given in equation \eqref{eq:VIX}. Via integration-by-parts it is possible to replicate risk-neutral expectations using European call and put options (see \cite{bick1982,carrMadan2001}). This section uses such techniques to show how the MGF of $\VIX_T^2$ can be replicated, and if $\E_te^{\xi\VIX_T^2}<\infty$ for some $\xi>0$ then VIX call options decay quickly as $K$ grows.

Moments on VIX can be replicated with VIX options: for positive $n> 1$
\[B_{t,T}\E _t\VIX_T^n=
n(n-1)\int_0^\infty K^{n-2}\Call^{vix}(t,K,T)dK\ .\]
Furthermore, the MGF of $\VIX_T^2$ can be replicated, and if finite will give a rate at which call options must decay for large strikes.

\begin{proposition}
\label{prop:MGFhedge}
Suppose Condition \ref{cond:int2}. For any $\xi\in\mathbb R$ there is a replication of $e^{\xi \smallVIX_T^2}$ that gives the MGF of $\VIX_T^2$ in terms of VIX options,
\[\E _te^{\xi \smallVIX_T^2} =1+ \frac{1}{B_{t,T}}\int_0^\infty\left(2\xi+4\xi^2K^2\right)e^{\xi K^2}\Call^{vix}(t,K,T)dK\ .\]
Moreover, if the MGF $\E _te^{\xi \smallVIX_T^2} <\infty $ for some $\xi>0$, then
\begin{equation}
\label{eq:CvixLimit}\lim_{K\rightarrow \infty} K^2e^{\xi K^2}\Call^{vix}(t,K+\epsilon,T) =0\qquad\hbox{for all $\epsilon>0$ and for all $t\leq T$} \ .
\end{equation}
\end{proposition}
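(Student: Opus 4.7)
The identity in Part 1 is a Carr--Madan-type replication based on the second derivative of $g(x)=e^{\xi x^2}$, and Part 2 is an immediate tail-estimate consequence. Here is how I would structure it.

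\textbf{Step 1: an algebraic identity.} For any twice-differentiable $g$ with $g(0)$ finite and any $x\geq 0$, the Taylor-with-integral-remainder identity gives
\[g(x)=g(0)+g'(0)x+\int_0^x (x-K)g''(K)\,dK=g(0)+g'(0)x+\int_0^\infty (x-K)^+ g''(K)\,dK.\]
Applied to $g(x)=e^{\xi x^2}$ one has $g(0)=1$, $g'(0)=0$, and $g''(K)=(2\xi+4\xi^2 K^2)e^{\xi K^2}$, so
\[e^{\xi\smallVIX_T^2}=1+\int_0^\infty(\VIX_T-K)^+\,(2\xi+4\xi^2 K^2)\,e^{\xi K^2}\,dK.\]

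\textbf{Step 2: take conditional expectation and apply Fubini.} Taking $\E_t$ and interchanging expectation with the $dK$-integral yields the claimed replication, since $B_{t,T}^{-1}\Call^{vix}(t,K,T)=\E_t(\VIX_T-K)^+$. The interchange needs justification in three cases. For $\xi\geq 0$, all terms are nonnegative and Tonelli applies unconditionally (both sides are simultaneously finite or $+\infty$). For $\xi<0$, $(2\xi+4\xi^2K^2)e^{\xi K^2}$ is integrable over $[0,\infty)$, $(\VIX_T-K)^+\leq \VIX_T$, and $\VIX_T$ is integrable (Condition \ref{cond:int2} bounds $\E_t\VIX_T^2$), so Fubini is legal. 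For $\xi=0$ the formula is trivial.

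\textbf{Step 3: the tail estimate for Part 2.} Once the MGF is finite, Part 1 says the nonnegative integrand $h(K):=(2\xi+4\xi^2K^2)e^{\xi K^2}\Call^{vix}(t,K,T)$ is integrable on $[0,\infty)$, hence $\int_K^{K+\epsilon} h(y)\,dy\to 0$ as $K\to\infty$. Since $\Call^{vix}(t,\cdot,T)$ is nonincreasing, one has for $K$ large
\[\int_K^{K+\epsilon} h(y)\,dy\geq \Call^{vix}(t,K+\epsilon,T)\int_K^{K+\epsilon}(2\xi+4\xi^2 y^2)e^{\xi y^2}\,dy.\]
The inner integral is exactly $2\xi(K+\epsilon)e^{\xi(K+\epsilon)^2}-2\xi K e^{\xi K^2}$ (since the integrand is $\tfrac{d}{dy}[2\xi y\,e^{\xi y^2}]$). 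For $K$ large, factoring $e^{\xi K^2}$ and using $e^{2\xi K\epsilon+\xi\epsilon^2}-1\geq 2\xi K\epsilon$ gives the lower bound $4\xi^2\epsilon\,K^2\,e^{\xi K^2}$ (up to a positive constant depending on $\xi,\epsilon$). Combining yields
\[4\xi^2\epsilon\,K^2\,e^{\xi K^2}\,\Call^{vix}(t,K+\epsilon,T)\;\leq\;\int_K^{K+\epsilon}h(y)\,dy\;\longrightarrow\;0,\]
which is precisely \eqref{eq:CvixLimit}.

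\textbf{Main obstacle.} Part 1 is essentially a bookkeeping exercise once the identity is in place; the only care needed is the Fubini justification for $\xi<0$. The substantive step is Part 2, whose cleanness depends on recognizing that $(2\xi+4\xi^2 K^2)e^{\xi K^2}=\tfrac{d}{dK}[2\xi K e^{\xi K^2}]$ so that the increment integral has a closed form, and then producing the lower bound of order $K^2 e^{\xi K^2}$ by a simple $e^x-1\geq x$ estimate. This monotonicity-plus-antiderivative trick is the crux and, in my view, the only non-routine step.
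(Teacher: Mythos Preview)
Your proof is correct and follows the same architecture as the paper's. Part~1 is essentially identical: the paper also writes down the pathwise identity $e^{\xi\smallVIX_T^2}=1+\int_0^\infty(2\xi+4\xi^2K^2)e^{\xi K^2}(\VIX_T-K)^+\,dK$ (calling it ``integration by parts'') and then interchanges expectation and integral. The paper invokes monotone convergence on the truncated integrals $\int_0^N$, which is clean for $\xi>0$ but a bit glib for $\xi<0$; your Tonelli/Fubini split, using integrability of $(2\xi+4\xi^2K^2)e^{\xi K^2}$ and $\E_t\VIX_T<\infty$ from Condition~\ref{cond:int2}, is a tidier justification there.

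For Part~2 the idea is the same (integrability of the tail plus monotonicity of $K\mapsto\Call^{vix}(t,K,T)$), but the tactics differ slightly. The paper strips the integrand down to $K^2e^{\xi K^2}\Call^{vix}(t,K,T)$, notes this is integrable on $[0,\infty)$, and then bounds $\int_a^{a+\epsilon}K^2e^{\xi K^2}\Call^{vix}(t,K,T)\,dK\geq \epsilon\,a^2e^{\xi a^2}\Call^{vix}(t,a+\epsilon,T)$ directly from monotonicity of both factors. You instead keep the full weight $(2\xi+4\xi^2K^2)e^{\xi K^2}$, recognize it as $\tfrac{d}{dK}[2\xi Ke^{\xi K^2}]$, evaluate the increment exactly, and then lower-bound via $e^x-1\geq x$. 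Both routes land on the same $K^2e^{\xi K^2}$ order; the paper's is a touch shorter, while your antiderivative observation is a nice structural remark even if not strictly needed.
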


\begin{proof} It follows from Condition \ref{cond:int2} that $\VIX_T<\infty$ almost surely, and then integration by parts is used to check that 
\[e^{\xi \smallVIX_T^2} =1+   \int_0^\infty\left(2\xi+4\xi^2K^2\right)e^{\xi K^2}\left(\VIX_T-K\right)^+dK\ .\]
Then, for $t<T$ the monotone convergence theorem is used to obtain 
\begin{align*}
& \int_0^\infty\left(2\xi+4\xi^2K^2\right)e^{\xi K^2}\Call^{vix}(t,K,T)dK\\
 &=B_{t,T}\lim_{N\rightarrow\infty} \E _t \int_0^N\left(2\xi+4\xi^2K^2\right)e^{\xi K^2}\left(\VIX_T-K\right)^+dK\\
 &= B_{t,T}\E _t \lim_{N\rightarrow\infty}\int_0^N\left(2\xi+4\xi^2K^2\right)e^{\xi K^2}\left(\VIX_T-K\right)^+dK\\
 &=B_{t,T}\left(\E _te^{\xi \smallVIX_T^2}-1\right) \ .
\end{align*}
Moreover, $\E _te^{\xi \smallVIX_T^2} <\infty$ if and only if $\int_0^\infty\left(2\xi+4\xi^2K^2\right)e^{\xi K^2}\Call^{vix}(t,K,T)dK<\infty$, in which case for $a$ large and finite $\epsilon>0$,
\[\int_0^{a+\epsilon}K^2e^{\xi K^2}\Call^{vix}(t,K,T)dK \geq \int_0^aK^2e^{\xi K^2}\Call^{vix}(t,K,T)dK +\epsilon a^2e^{\xi a^2}\Call^{vix}(t,a+\epsilon,T)\ ,\]
where the inequality uses the monotonically-decreasing property of the call option in $K$. Taking the limit as $a$ tends toward infinity yields,
\[0\geq \overline{\lim}_{a\rightarrow\infty}a^2e^{\xi a^2}\Call^{vix}(t,a+\epsilon,T) \ ,\]
and since the quantity is non-negative it follows that the limit is zero.
\end{proof}

Section \ref{sec:modelMisprice} will show how the replication in Proposition \ref{prop:MGFhedge} is useful, as it can be part of an argument to show that reversal of the inequality in Lemma \ref{lemma:MGFbound} can result in static arbitrage. Moreover, this static replication is informative because it shows exponential decay for large-$K$ VIX call options if there is finiteness of $\E_te^{\xi\VIX_T^2}$ for some $\xi>0$. However, this large-$K$ asymptotic cannot be differentiated to get a tail distribution, but Section \ref{sec:EVT} will explore how to obtain the VIX's tail distribution.

\subsection{Robustness of Lemma \ref{lemma:MGFbound} to Mispricing}
\label{sec:modelMisprice}

Suppose there are mis-priced options such that there is a reversal of the inequality in \eqref{eq:MGFbound}. Lemma \ref{lemma:MGFbound} derives \eqref{eq:MGFbound} under the assumption that $\E$ is a risk-neutral expectation, and so there will be an arbitrage if such a mis-pricing occurs. In general, mis-pricings may not have an easily-identified arbitrage portfolio, but there are clearly defined, static portfolios that replicate all quantities considered in inequality \eqref{eq:MGFbound}, and  this section will show how mis-pricing can be exploited with a static portfolio of tradeable assets; the tradeable asset are futures and options on the underlying $S$ and $\VIX$, as well as shares in $S$.

Assume $S_tB_{0,t}$ is a martingale so that the future on $S_T$ is
\[F_{t,T} = S_t/B_{t,T}\qquad\forall 0\leq t<T\ .\]
For some $\xi,p$ and $q$ such that $\frac{2\xi p}{\tau}> 1$, let $\tilde \xi=\frac{2\xi}{\tau}$ and consider the following strategy at time $t<T$:
  \begin{align*}
& \mbox{-Buy portfolio $\Pi_t^1$ composed of European put options $\Put(t,\cdot,T+\tau)$,}\\
&~~~~~~~~~~\Pi_t^1=\frac{\tilde \xi(\tilde \xi q+1)}{B_{T,T+\tau}}\int_0^\infty \frac{\Put(t,K,T+\tau)}{K^{\tilde\xi q+2}}dK\ ,\\
 & \hbox{-Buy portfolio $\Pi_t^2$ composed of European call options $\Call(t,\cdot,T+\tau)$,}\\
  &~~~~~~~~~\Pi_t^2 =\frac{1}{B_{T,T+\tau}^{\tilde\xi p}}
  \tilde \xi(\tilde \xi p-1)\int_0^\infty \Call(t,K,T)\frac{dK}{K^{2-\tilde\xi p}}\ ,  \\
&  \hbox{-Sell portfolio $\Pi_t^3$ composed of VIX call options $\Call^{vix}(t,\cdot,T)$,}\\
&~~~~~~~~~\Pi_t^3
 =B_{t,T}+\int_0^\infty\left(2\xi+4\xi^2K^2\right)e^{\xi K^2}\Call^{vix}(t,K,T)dK\ .
 \end{align*}
It can be verified using integration by parts (see \cite{bick1982,carrLee2008,carrMadan2001,lee2004}) that these are replicating portfolios such that
\begin{align*}
\Pi_t^1&= \hbox{time-$t$ value of a claim on $\tfrac{1}{qB_{T,T+\tau}}S_{T+\tau}^{-\tilde\xi q}$, for all $t\leq T+\tau$, }\\ 
\Pi_t^2&=\hbox{time-$t$ value of a claim on  $\tfrac1p\left(F_{T,T+\tau}\right)^{\tilde \xi p}$, for all $t\leq T$}
 \end{align*} and as Proposition \ref{prop:MGFhedge} shows 
 \[\Pi_t^3=\hbox{time-$t$ value of a claim on $e^{\xi \VIX_T^2}$, for all $t\leq T$.}
 \]
In terms of these portfolios, inequality \eqref{eq:MGFbound} is equivalent to

\begin{equation}
\label{eq:pi_inequality}
\Pi_t^3<\Pi_t^1+\Pi_t^2\qquad\forall t\leq T\ .
\end{equation}
If inequality \eqref{eq:pi_inequality} is reversed then there will be a static arbitrage consisting of listed option prices, as concluded from the following two propositions:
 
 \begin{proposition}
 \label{prop:time_T_noArb}
 There is arbitrage at time $t=T$ if inequality \eqref{eq:pi_inequality} is reversed.
 \end{proposition}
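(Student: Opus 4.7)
The plan is to convert the reversed inequality at $t=T$ into an explicit static arbitrage constructed from the three replicating portfolios. My first step would be to pin down the time-$T$ realized values of $\Pi^2$ and $\Pi^3$: because the component options both mature at $T$, no-arbitrage at the individual option level forces their market prices to coincide with their realized payoffs, and combining this with the Carr--Madan identity $x^{\tilde\xi p}=\tilde\xi p(\tilde\xi p-1)\int_0^\infty(x-K)^+K^{\tilde\xi p-2}dK$ (valid for $\tilde\xi p>1$) together with Proposition \ref{prop:MGFhedge} evaluated at $t=T$ yields the pathwise identities
\[\Pi_T^2=\tfrac{1}{p}(S_T/B_{T,T+\tau})^{\tilde\xi p}\qquad\text{and}\qquad\Pi_T^3=e^{\xi\VIX_T^2}.\]
Hence the hypothesis $\Pi_T^3>\Pi_T^1+\Pi_T^2$ is exactly the model-free assertion $e^{\xi\VIX_T^2}>\Pi_T^1+\tfrac{1}{p}(S_T/B_{T,T+\tau})^{\tilde\xi p}$ relating observables to the still-live market price of the put portfolio.

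Second, I would display the explicit arbitrage trade: at time $T$, go long $\Pi^1$, long $\Pi^2$, and short $\Pi^3$. The aggregate market cost is $\Pi_T^1+\Pi_T^2-\Pi_T^3<0$ under the reversed inequality, so entering the position generates a strictly positive upfront cash inflow $c=\Pi_T^3-\Pi_T^1-\Pi_T^2>0$. Because $\Pi^2$ and $\Pi^3$ are matured at $T$, their long/short legs settle instantly into cash flows that offset their market-price transactions, leaving only the live long $\Pi^1$ whose payoff at $T+\tau$ is the non-negative claim $\tfrac{1}{qB_{T,T+\tau}}S_{T+\tau}^{-\tilde\xi q}$. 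Reinvesting $c$ at the risk-free rate from $T$ to $T+\tau$ and collecting the put payoff produces terminal wealth bounded below by $c/B_{T,T+\tau}>0$ almost surely, which is the defining property of a static arbitrage.

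The main obstacle is the careful accounting of cash flows at the maturity instant $T$: because $\Pi^2$ and $\Pi^3$ both settle at $T$, one must articulate clearly that the positive inflow $c$ is genuinely extractable rather than trivially consumed by offsetting settlement payments. I would address this by decomposing each transaction into its market-price leg (executed at $T$) and its settlement leg (cleared simultaneously), invoking the identity of market price and payoff at maturity to pair off the four cash flows from $\Pi^2$ and $\Pi^3$; the residual $c$ is the price imbalance among the three portfolios and therefore survives as cash. A final subtlety is justifying that the $T+\tau$ put payoff $\tfrac{1}{qB_{T,T+\tau}}S_{T+\tau}^{-\tilde\xi q}$ is non-negative almost surely, which follows immediately from $S_{T+\tau}>0$.
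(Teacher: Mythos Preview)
Your argument contains a genuine gap in the cash-flow accounting, and it is exactly the obstacle you flag but do not resolve. At time $T$ the portfolios $\Pi^2$ and $\Pi^3$ have matured, so (as you yourself note) their market prices coincide with their settlement payoffs. Tracing the cash: you pay $\Pi_T^1+\Pi_T^2$ and receive $\Pi_T^3$ in the market-price leg, leaving $c=\Pi_T^3-\Pi_T^1-\Pi_T^2$ in hand; then settlement of the matured positions pays you $\Pi_T^2$ (from the long) and costs you $\Pi_T^3$ (on the short). The net cash after both legs is therefore $c+\Pi_T^2-\Pi_T^3=-\Pi_T^1$, not $c$. All you have done is purchase the live put portfolio at its quoted price $\Pi_T^1$ and carry it to $T+\tau$; that is an ordinary trade, not an arbitrage. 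The reversed inequality $\Pi_T^3\geq\Pi_T^1+\Pi_T^2$ never enters the payoff analysis, because the two matured portfolios cancel out of the position entirely.

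What is missing is a comparison of \emph{payoffs at $T+\tau$}, not of prices at $T$. The paper's proof supplies precisely this: it inserts the intermediate replicable claim on $(S_{T+\tau}/F_{T,T+\tau})^{-2\xi/\tau}$ and splits the inequality into two legs. For the first leg, if $\Pi_T^3\geq\mathbb V_T\big[(S_{T+\tau}/F_{T,T+\tau})^{-2\xi/\tau}\big]$, an arbitrage portfolio (short log-contracts, long the power claim, plus bonds) has non-positive entry cost and strictly positive time-$(T+\tau)$ payoff by the convexity inequality $e^x-e^{x_0}-e^{x_0}(x-x_0)>0$. For the second leg, if $\mathbb V_T\big[(S_{T+\tau}/F_{T,T+\tau})^{-2\xi/\tau}\big]\geq\Pi_T^1+\Pi_T^2$, another portfolio (short the power claim, long $\Pi^1$, plus bonds) has non-positive cost and non-negative payoff by Young's inequality $e^{x+x_0}\leq\tfrac1q e^{qx}+\tfrac1p e^{px_0}$. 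If \eqref{eq:pi_inequality} is reversed, at least one leg fails, yielding an explicit arbitrage. Your proposal lacks any such payoff-level domination, which is the essential ingredient.
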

 \begin{proof}
 \label{sec:atTime_T} 
 Let $\mathbb V_T$ denote the future (non-discounted) valuation of a claim. $\mathbb V_T$ is obtained from replication with listed prices for the underlying and options $\Put(T,K,T+\tau)$'s and/or $\Call(T,K,T+\tau)$'s (i.e. the integration-by-parts/replication method in \cite{bick1982,carrLee2008,carrMadan2001,lee2004}). For example, $\mathbb V_T[S_{T+\tau}] = F_{T,T+\tau}=\Call(T,0,T+\tau)$ and $-\frac{2}{\tau}\mathbb V_T[\log(S_{T+\tau}/F_{T,T+\tau})] = \VIX_T^2$. There is arbitrage at time $T$ unless the following strict inequality holds:  
 \begin{equation}
 \label{eq:violated_T_firstLeg}\Pi_T^3<\mathbb V_T\left[ e^{-\frac{2\xi}{\tau}\log(S_{T+\tau}/F_{T,T+\tau})}\right]\ .
 \end{equation}
Using the fact that $\Pi_T^3 = e^{\xi\smallVIX_T^2}$, if inequality \eqref{eq:violated_T_firstLeg} is reversed there is the following arbitrage portfolio:
 \begin{align*}
& \mbox{-Short $e^{\xi\smallVIX_T^2}$ claims on $-\frac{2\xi}{\tau}\log(S_{T+\tau}/F_{T,T+\tau})$,}\\
 & \hbox{-Long 1 claim on $(S_{T+\tau}/F_{T,T+\tau})^{-2\xi/\tau}$ (value given by right-hand side of \eqref{eq:violated_T_firstLeg}),}\\
&  \hbox{-Hold $(\xi\VIX_T^2-1)e^{\xi\smallVIX_T^2}$ contracts in the bond with price $B_{T,T+\tau}$,} 
 \end{align*}
 which has non-positive entry cost at time $T$, and at time $T+\tau$ has positive payoff due to the inequality $e^x-e^{x_0}-e^{x_0}(x-x_0)>0$ with $x = -\frac{2\xi}{\tau}\log(S_{T+\tau}/F_{T,T+\tau})$ and $x_0=\VIX_T^2$. 
 
Also at time $T$, there is an arbitrage unless the following strict inequality holds:
  \begin{equation}
 \label{eq:violated_T_secondLeg}\mathbb V_T\left[ e^{-\frac{2\xi}{\tau}\log(S_{T+\tau}/F_{T,T+\tau})}\right]<\Pi_T^1+\Pi_T^2 .
 \end{equation}
 Now using the fact that $\Pi_T^1$ is a claim on $\frac{1}{qB_{T,T+\tau}}S_{T+\tau}^{-\frac{2\xi q}{\tau}}$ and $\Pi_T^2$ is a settled claim equal to $\frac1p(F_{T,T+\tau})^{\frac{2\xi q}{\tau}}$, if equation \eqref{eq:violated_T_secondLeg} is reversed there is the following arbitrage portfolio:
 \begin{align*}
 & \mbox{-Short 1 claim on $  e^{-\frac{2\xi}{\tau}\log(S_{T+\tau}/F_{T,T+\tau}) } $,}\\
 & \hbox{-Long $B_{T,T+\tau}$-many of portfolio $\Pi_T^1$,}\\
&  \hbox{-Hold $\Pi_T^2$-many contracts in the $B_{T,T+\tau}$ bond,} 
 \end{align*}
  which has non-positive entry cost at time $T$. At time $T+\tau$ this portfolio has non-negative value and non-zero probability of positive payoff due to Young's inequality $e^{x+x_0}\leq \frac1qe^{xq}+\frac1pe^{x_0p}$ with $x=-\frac{2\xi}{\tau}\log(S_{T+\tau})$ and $x_0=\frac{2\xi}{\tau}\log(F_{T,T+\tau})$, and where there is equality iff $x=\tfrac{p}{q}x_0$. Hence, from equations \eqref{eq:violated_T_firstLeg} and \eqref{eq:violated_T_secondLeg} there is the strict inequality 
  \[\Pi_T^3<\Pi_T^1+\Pi_T^2\ .\]
 \end{proof}
 
  \begin{proposition}
  There is arbitrage at time $t<T$ if inequality \eqref{eq:pi_inequality} is reversed.
  \end{proposition}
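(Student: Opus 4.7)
The plan is to reduce the case $t<T$ to the $t=T$ case treated in Proposition \ref{prop:time_T_noArb} via a static buy-and-hold from $t$ to $T$ followed by a rebalance at $T$ into the replicating portfolio that drives the two-leg arbitrage there. Since $\Pi^1,\Pi^2,\Pi^3$ are fixed weighted sums of listed options, the triple of positions can be entered at $t$ and carried unchanged to $T$, at which point their structured settlements make the time-$T$ analysis directly applicable.

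At time $t$ I short one unit of $\Pi^3$, buy one unit each of $\Pi^1$ and $\Pi^2$, and deposit the non-negative residual cash $\Pi_t^3-\Pi_t^1-\Pi_t^2\geq 0$ in the bond; entry cost is zero. Holding to time $T$: the VIX-call-plus-bond bundle $\Pi^3$ self-settles to the cash liability $e^{\xi\VIX_T^2}$ by Proposition \ref{prop:MGFhedge}; the short-dated SPX-call bundle $\Pi^2$ self-settles to the cash receipt $\frac{1}{p}F_{T,T+\tau}^{\tilde\xi p}$; and the SPX-put bundle $\Pi^1$ remains alive as puts maturing at $T+\tau$ with mark-to-market $\Pi_T^1$, which I liquidate at $T$.

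I then use the pooled time-$T$ cash $\Pi_T^1+\Pi_T^2-\Pi_T^3+(\Pi_t^3-\Pi_t^1-\Pi_t^2)/B_{t,T}$ to purchase at market the replicating portfolio of the payoff $(S_{T+\tau}/F_{T,T+\tau})^{-\tilde\xi}$, which is an $\mathcal{F}_T$-measurable function of $S_{T+\tau}$ that can be synthesized at time $T$ from calls and puts on $S$ of maturity $T+\tau$ via the integration-by-parts identity of \cite{bick1982,carrMadan2001,lee2004}. Its time-$T$ cost is $\mathbb{V}_T\bigl[(S_{T+\tau}/F_{T,T+\tau})^{-\tilde\xi}\bigr]$, which by the two strict inequalities
\[
\Pi_T^3<\mathbb{V}_T\bigl[(S_{T+\tau}/F_{T,T+\tau})^{-\tilde\xi}\bigr]\leq\Pi_T^1+\Pi_T^2
\]
inherited from the proof of Proposition \ref{prop:time_T_noArb} (Jensen's inequality and pathwise Young) is sandwiched between $\Pi_T^3$ and $\Pi_T^1+\Pi_T^2$. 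Consequently the residual cash after the rebalance is pathwise non-negative and strictly positive on a set of positive probability, and holding the replicating portfolio to $T+\tau$ delivers the pathwise non-negative payoff $(S_{T+\tau}/F_{T,T+\tau})^{-\tilde\xi}\geq 0$, giving terminal wealth that is pathwise non-negative and strictly positive with positive probability from zero initial cost at $t$.

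The main obstacle is the time-$T$ rebalance, which requires that the SPX options market at $T$ be rich enough to replicate $(S_{T+\tau}/F_{T,T+\tau})^{-\tilde\xi}$ exactly as a function of $S_{T+\tau}$, and that the pooled cash cover the replication with a pathwise non-negative remainder. The first is the standing Breeden-Litzenberger replication assumption used throughout the paper; the second is precisely the content of the two-leg inequality chain of Proposition \ref{prop:time_T_noArb} applied at the realized state at $T$.
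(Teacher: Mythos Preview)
Your opening move---short $\Pi^3$, long $\Pi^1+\Pi^2$, collect $\Pi_t^3-\Pi_t^1-\Pi_t^2\geq 0$---matches the paper exactly, and the paper stops essentially right there: at time $T$ the held position is worth $\Pi_T^1+\Pi_T^2-\Pi_T^3$, which Proposition~\ref{prop:time_T_noArb} shows is strictly positive, so liquidating gives a positive cash flow on top of the non-negative initial inflow. That is the whole proof.

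Your additional rebalance step introduces a genuine gap. After settling the VIX leg and liquidating $\Pi^1$, your pooled time-$T$ cash is
\[
\Pi_T^1+\Pi_T^2-\Pi_T^3+\frac{\Pi_t^3-\Pi_t^1-\Pi_t^2}{B_{t,T}}\ ,
\]
and you then spend $\mathbb V_T\bigl[(S_{T+\tau}/F_{T,T+\tau})^{-\tilde\xi}\bigr]$ on the replicating portfolio. The sandwich $\Pi_T^3<\mathbb V_T[\cdot]\leq \Pi_T^1+\Pi_T^2$ only gives the residual the lower bound
\[
\bigl(\Pi_T^1+\Pi_T^2-\mathbb V_T[\cdot]\bigr)-\Pi_T^3+\frac{\Pi_t^3-\Pi_t^1-\Pi_t^2}{B_{t,T}}\ \geq\ -\Pi_T^3+\frac{\Pi_t^3-\Pi_t^1-\Pi_t^2}{B_{t,T}}\ ,
\]
which can be strictly negative (for instance whenever the reversal at $t$ holds with equality, so the bond term vanishes, while $\Pi_T^3=e^{\xi\smallVIX_T^2}>0$). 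Hence your claim that ``the residual cash after the rebalance is pathwise non-negative'' does not follow from the two-leg chain of Proposition~\ref{prop:time_T_noArb}, and without it the terminal wealth need not be pathwise non-negative. The fix is simply to drop the rebalance: the pooled cash itself is already strictly positive by $\Pi_T^1+\Pi_T^2-\Pi_T^3>0$, which is exactly the paper's one-line argument.
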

  \begin{proof}
  If equation \eqref{eq:pi_inequality} is reversed, then there is a clear arbitrage by longing $\Pi_t^1$ and $\Pi_t^2$ and shorting $\Pi_t^3$ for a non-zero net cash flow at time $t$, which will then receive a positive cash flow at time $T$ because it was shown in Proposition \ref{prop:time_T_noArb} that $\Pi_T^2+\Pi_T^1-\Pi_T^3>0$.
  \end{proof}

\subsection{Extreme-Value Theory for VIX's Distribution}
\label{sec:EVT}

The moment-formula limits $\beta_R$ and $\beta_L$ have a relationship to the rates in the extreme-value distributions of the underlying asset. In particular, the stochastic volatility inspired (SVI) parameterization of the volatility surface leads to an extreme-value distribution that is parameterized by $\beta_R$ for the right tail and $\beta_L$ for the left. The SVI parameterization is related to extreme strikes because its construction is consistent with the moment formulas in \eqref{eq:moment1} and \eqref{eq:moment2}, namely that the square of large-strike implied volatility is proportional to log-moneyness divided by the square-root of time-to-maturity (see \cite{gatheralBook,gatheral2013}). The result of this section can be applied as a risk-neutral systemic risk indicator, which is related to those proposed in \cite{malz2013}.

This section's main technical hurdle is that the limiting behavior for VIX options given in Proposition \ref{prop:MGFhedge} cannot be differentiated in $K$. In other words, the limit in \eqref{eq:CvixLimit} shows a rate of convergence for $\Call^{vix}$, but cannot be differentiated to find an asymptotic for the tail distribution $\frac{\partial}{\partial K}C^{vix}$. However, by assuming differentiability in $K$ and assuming that the SVI is not mis-specified, the asymptotic tail distribution is obtained. In particular, the limiting right slope for VIX options $\beta_R^{vix}=\limsup_{K\nearrow\infty}\frac{\hat\nu^2(t,K,T)}{\log(K/\E_tS_T)/(T-t)}$ gives the rate of polynomial decay in the peaks-over-threshold (POT) distribution of $\VIX_T$ if $\beta_R^{vix}\in(0,2)$. 

The SVI uses a parameterization of the implied-volatility smile,
\begin{equation}
\label{eq:omega}
\omega(k) = a+b\left(\rho(k-m)+\sqrt{(k-m)^2+\sigma^2}\right)
\end{equation}
where $k = \log(K/X_{t,T})$, $X_{t,T}=\E_t\VIX_T$, and with parameters $(a,b,\rho,m,\sigma)$ fitted so that 
\[\omega(k) = \hat\nu^2(t,K,T)(T-t)\ ,\]
where $\hat\nu$ is VIX-option implied volatility given in Definition \ref{def:hatNu}. The VIX call price is then
\begin{equation}
\label{eq:callSVIprice}
\Call^{vix}(t,K,T) = B_{t,T}X_{t,T}\left(\Phi\left(d_+(k)\right)-\Phi\left(d_-(k)\right)e^{k}\right)
\end{equation}
where $d_\pm(k)= -\frac{k}{\sqrt{\omega(k)}}\pm\tfrac{\sqrt{\omega(k)}}{2}$.

The generalized Pareto distribution (GPD) with parameter $\alpha>0$ has cumulative distribution function defined  as
\[ G_\alpha(y)\triangleq \Bigg\{ 
\begin{array}{ll}
&\\[-1.1cm]
1-\left(1+y/\alpha\right)^{-\alpha}&\hbox{for $y\geq0$ and $\alpha<\infty$}\\[-0.32cm]
1-e^{-y}&\hbox{for $y\geq 0$ and $\alpha=\infty$}\ . 
\end{array}\]
The Pickands-Balkema-de Haan theorem states that the following are equivalent:
\begin{enumerate}[(i)]
\item The VIX's distribution function is in the maximum domain of attraction of $H_\alpha$ defined as $H_\alpha(y) = \exp(G_\alpha(y)-1)$ (see Appendix \ref{app:mda}).
\item There exists a positive measurable function $a(\cdot)$ such that the peaks-over-threshold (POT) distribution converges to the GPD in the following way
\[\lim_{x\nearrow\infty}\mathbb P_t\left(\frac{\VIX_T-x}{a(x)}\geq y\Big|\VIX_T\geq x\right)=1-G_\alpha(y)\ .\]
\end{enumerate}
For details on this theory and more general information about extreme-value theory, the reader is directed to \cite{degan, resnick1987}. Assuming twice-differentiability in $K$, the Breeden-Litzenberger formula yields the VIX's distribution function, $\mathbb P_t(\VIX_T\geq K) = -\tfrac{1}{B_{t,T}}\frac{\partial}{\partial K}\Call^{vix}(t,K,T)$. Then for two large strikes $K_1>K_0$ the POT distribution can be written as
\begin{align}
\label{eq:POTratio}
\mathbb P_t\left(\VIX_T\geq K_1\Big|\VIX_T\geq K_0\right)= \frac{\mathbb P_t(\VIX_T\geq K_1)}{\mathbb P_t(\VIX_T\geq K_0)} 
&=\frac{\int_{K_1}^\infty\frac{\partial^2}{\partial K^2}\Call^{vix}(t,K,T)dK}{\int_{K_0}^\infty\frac{\partial^2}{\partial K^2}\Call^{vix}(t,K,T)dK}\ .
\end{align}
If the VIX options imply a value $\beta_R^{vix}\in(0,2)$, then it follows that there is $\tilde p\in(0,\infty) $ such that $\E_t\VIX_T^{1+\tilde p}=\infty$, and the VIX's distribution is heavy tailed. However, the asymptotics from the moment formulas do not apply to the derivative $\frac{\partial^2}{\partial K^2}\Call^{vix}(t,K,T)$, but $\beta_R^{vix}$ combined with the SVI is used to obtain a GPD for the underlying's tail distribution. 
\begin{proposition}
\label{prop:EVT}
Suppose the large-strike limit from equation \eqref{eq:moment1} is applied to VIX call options to obtain 
\[\frac{\hat\nu^2(t,K,T)}{\log(K/X_{t,T})/(T-t)}\rightarrow \beta_R^{vix}\in(0,2)\qquad\hbox{as $K\rightarrow\infty$.}\]
Suppose further that the implied volatility surface is fit by the SVI parameterization, where the fit does not admit static arbitrage (i.e. no butterfly or calendar-spread arbitrage). Then the asymptotic behavior for the POT distribution with scaling function $a(x) = \tfrac x\alpha$ is

\[\lim_{x\nearrow\infty}\mathbb P_t\left(\frac{\VIX_T-x}{x/\alpha}\geq y\Big|\VIX_T\geq x\right)=\left(1+\frac{y}{\alpha }\right)^{-\alpha}\ ,\]
for $y\geq 0$ and $x$ tending toward infinity, where $\alpha =\frac{1}{2}\left(\sqrt{\frac{1}{\beta_R^{vix}}}+\frac{\sqrt{\beta_R^{vix}}}{2}\right)^2>1$ for all $\beta_R^{vix}<2$.
\end{proposition}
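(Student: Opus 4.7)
The plan is to pass from the extreme-strike hypothesis on $\hat\nu$ to an explicit regularly varying tail for $\VIX_T$, and then read off GPD convergence by a direct ratio computation. The first step is to pin down the large-$k$ behavior of the SVI total variance $\omega(k)$. From \eqref{eq:omega}, as $k\to\infty$ one has $\sqrt{(k-m)^2+\sigma^2}=k-m+O(1/k)$, so $\omega(k)=b(1+\rho)k+O(1)$ and $\omega'(k)\to b(1+\rho)$. Matching against the large-strike moment-formula limit $\omega(k)/k\to\beta_R^{vix}$ identifies $b(1+\rho)=\beta_R^{vix}$, which is the only ingredient from the hypothesis I would carry forward.

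Next I would convert \eqref{eq:callSVIprice} into a tail probability via Breeden-Litzenberger. Writing $\Call^{vix}=B_{t,T}\bigl[X_{t,T}\Phi(d_+)-K\Phi(d_-)\bigr]$ and using the standard Black--Scholes identity $X_{t,T}\phi(d_+)=K\phi(d_-)$, differentiating in $K$ and using $d_+-d_-=\sqrt{\omega(k)}$ with $dk/dK=1/K$ collapses to
\[
\mathbb P_t(\VIX_T\geq K)=-\tfrac{1}{B_{t,T}}\tfrac{\partial}{\partial K}\Call^{vix}(t,K,T)=\Phi(d_-(k))-\phi(d_-(k))\,\tfrac{\omega'(k)}{2\sqrt{\omega(k)}}.
\]
Using $\omega(k)\sim\beta_R^{vix}k$ and $\omega'(k)\to\beta_R^{vix}$, I have $d_-(k)\sim -c\sqrt{k}$ with $c=\sqrt{1/\beta_R^{vix}}+\sqrt{\beta_R^{vix}}/2>0$. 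The Mills-ratio asymptotic $\Phi(-x)\sim\phi(x)/x$ gives $\Phi(d_-)\sim\phi(c\sqrt k)/(c\sqrt k)$ and $\phi(d_-)\omega'/(2\sqrt\omega)\sim\phi(c\sqrt k)\sqrt{\beta_R^{vix}}/(2\sqrt k)$. Subtracting and using $1/c-\sqrt{\beta_R^{vix}}/2>0$ for $\beta_R^{vix}<2$ (which also guarantees the leading term is a genuine nonnegative probability, consistent with the no-static-arbitrage assumption on SVI), I obtain
\[
\mathbb P_t(\VIX_T\geq K)\;\sim\;\frac{C}{\sqrt{\log(K/X_{t,T})}}\,\Bigl(\frac{K}{X_{t,T}}\Bigr)^{-\alpha}\qquad\text{as }K\to\infty,
\]
where $\alpha=c^2/2=\tfrac12\bigl(\sqrt{1/\beta_R^{vix}}+\sqrt{\beta_R^{vix}}/2\bigr)^2$ and $C>0$ is an explicit constant; a short convexity check on $\beta\mapsto 1/\beta+1+\beta/4$ confirms $\alpha>1$ for $\beta_R^{vix}<2$.

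Finally, this asymptotic identifies $\bar F_{\VIX_T}(K)=\mathbb P_t(\VIX_T\geq K)$ as a regularly varying function of index $-\alpha$: the $(\log K)^{-1/2}$ factor is slowly varying. Substituting $x_1=x(1+y/\alpha)$ into \eqref{eq:POTratio} and taking $x\to\infty$, the slowly varying and normalizing pieces cancel in the ratio, leaving
\[
\mathbb P_t\!\Bigl(\frac{\VIX_T-x}{x/\alpha}\geq y\,\Big|\,\VIX_T\geq x\Bigr)=\frac{\bar F_{\VIX_T}(x(1+y/\alpha))}{\bar F_{\VIX_T}(x)}\longrightarrow \bigl(1+y/\alpha\bigr)^{-\alpha},
\]
which is the claimed GPD limit with scaling $a(x)=x/\alpha$. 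The main obstacle I anticipate is step two, namely controlling the subleading terms when subtracting the two $\phi(d_-)$ contributions: the leading exponential rates match, so the coefficient $1/c-\sqrt{\beta_R^{vix}}/2$ has to come out positive from a careful expansion, and one must also rule out cancellation with $O(1)$ corrections hidden in $\omega(k)-\beta_R^{vix}k$. The SVI form is sharp enough to handle this bookkeeping because the $O(1)$ corrections affect only the slowly varying prefactor and cancel in the ratio.
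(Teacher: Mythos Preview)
Your argument is correct and reaches the same conclusion by a genuinely different route than the paper. The paper works with the \emph{density}: it invokes the Gatheral--Jacquier formula for the SVI log-density $\frac{g(k)}{\sqrt{2\pi\omega(k)}}e^{-d_-(k)^2/2}$, computes the large-$k$ limits $g(k)\to\tfrac14-\tfrac{(\beta_R^{vix})^2}{16}$ and $d_-(k)\sim -c\sqrt k$, and then recovers the tail ratio in \eqref{eq:POTratio} by writing it as a quotient of integrals of $\partial_K^2\Call^{vix}$ and applying L'H\^opital. You instead stay one derivative lower: you compute the tail CDF directly from $-\partial_K\Call^{vix}$, use the Mills ratio to get $\bar F_{\smallVIX_T}(K)\sim C\,k^{-1/2}(K/X_{t,T})^{-\alpha}$, and read off the GPD limit from regular variation of index $-\alpha$. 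Your route is more elementary in that it avoids the $g(k)$ density formula and the L'H\^opital step; the paper's route has the advantage of making the no-butterfly-arbitrage condition $g(k)\geq 0$ and $d_+\to-\infty$ explicitly visible (your positivity check $1/c-\sqrt{\beta_R^{vix}}/2>0$ is the tail-CDF analogue of $g(k)\to\tfrac14-\tfrac{(\beta_R^{vix})^2}{16}>0$, both equivalent to $\beta_R^{vix}<2$). Your stated obstacle---that the $O(1)$ correction $\omega(k)-\beta_R^{vix}k$ shifts $d_-^2/2$ by a constant and therefore only rescales the slowly varying prefactor---is handled correctly, since in SVI $d_-=-c\sqrt k+O(k^{-1/2})$ so $d_-^2/2=\alpha k+O(1)$ and the resulting $O(1)$ multiplicative factor cancels in the ratio.
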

\begin{proof}
No butterfly arbitrage in the SVI fit is enough for there to be an explicit formula for the VIX distribution's density function at time $T$, namely
\[\frac{\partial }{\partial k}\mathbb P_t\Big(\log(\VIX_T/X_{t,T})\leq k\Big)= \frac{g(k)}{\sqrt{2\pi \omega(k)}}\exp\left(-\frac12d_-(k)^2\right)\ ,\]
where $k$, $d_\pm(k)$, $\omega(k) $ are given in equations \eqref{eq:omega} and \eqref{eq:callSVIprice}, and 
\[g(k) =\left(1-\frac{k\omega'(k)}{ 2\omega(k)}\right)^2-\frac{(\omega'(k))^2}{4}\left(\frac{1}{\omega(k)}+\frac14\right)+\frac{\omega''(k)}{2} \ .\]
The SVI fit to this slice of the implied-volatility surface is free from butterfly arbitrage if $g(k)\geq 0$ for all $k\in\mathbb R$ and $d_+(k)\rightarrow -\infty$ as $k\rightarrow\infty$ (see \cite{gatheralJaquier2014}).

The moment formula in equation \eqref{eq:moment1} says that $\omega(k) \sim k\beta_R^{vix}$ for $k$ large (i.e. as $K\nearrow\infty$), which applied to the density function yields
\begin{align*}
g(k)&\sim\frac14- \frac{(\beta_R^{vix})^2}{16}&\hbox{for $k$ tending toward $\infty$,}\\
d_-(k)&\sim-\sqrt{\frac{k}{\beta_R^{vix}}}-\frac{\sqrt{k\beta_R^{vix}}}{2}&\hbox{for $k$ tending toward $\infty$.}
\end{align*}
Placing these asymptotic approximations into the density yields
\begin{align*}
&\frac{\partial }{\partial k}\mathbb P_t\Big(\log(\VIX_T/X_{t,T})\leq k\Big)\sim \frac{(4-(\beta_R^{vix})^2)}{16\sqrt{2\pi k\beta_R^{vix}}}e^{-\alpha k}\qquad\hbox{for $k$ tending toward $\infty$,}
\end{align*}
where $\alpha =\frac{1}{2}\left(\sqrt{\frac{1}{\beta_R^{vix}}}+\frac{\sqrt{\beta_R^{vix}}}{2}\right)^2>1$ for all $\beta_R^{vix}<2$. Placing this asymptotic approximation of the tail density into equation \eqref{eq:POTratio}, and then applying L'Hopit\^al's rule, the result is found:
\[\frac{\mathbb P_t\left(\VIX_T\geq x\left(1+\frac{y}{\alpha }\right)\right)}{\mathbb P_t(\VIX_T\geq x)} \sim \frac{\int_{x\left(1+\frac{y}{\alpha }\right)}^\infty \frac{\partial^2}{\partial K^2}\Call^{vix}(t,K,T)dK}{\int_{x}^\infty \frac{\partial^2}{\partial K^2}\Call^{vix}(t,K,T)dK}\sim\frac{\int_{\ell(x,y)}^\infty \frac{1}{\sqrt k}e^{-\alpha k}dk}{\int_{\ell(x,0)}^\infty \frac{1}{\sqrt k}e^{-\alpha k}dk}\sim\left(1+\frac{y}{\alpha }\right)^{-\alpha}\]
for $x$ large, where $\ell(x,y) = \log(x(1+y/\alpha)/X_{t,T})$.
\end{proof}

Proposition \ref{prop:EVT} shows that for $\beta_R^{vix}\in(0,2)$ with an SVI fit, the POT distribution with scaling $a(x) =\tfrac x\alpha$ converges to a GPD with parameter $\alpha$. Hence, the Pickands-Balkema-de Haan theorem says this POT distribution is in the maximum domain of attraction of a generalized extreme-value distribution (see Appendix \ref{app:mda}). 
\begin{remark}Large-strike asymptotics for SVI are presented in \cite{sturm2010}. In particular, there is a large-$k$ series expansion in powers of $k^{-1/2}$.
\end{remark}
\begin{remark}If $\beta_R^{vix}=2$ then SVI may admit arbitrage as $d_+(k)\rightarrow 0$ as $k\rightarrow 0$, a limit which can be seen by using the large-$k$ SVI expansion in \cite{sturm2010}. If $\beta_R^{vix}=0$ then the distribution is not heavy tailed and the extreme-value theory needs to be reworked to find an appropriate scaling function $a(\cdot)$ for the POT distribution. 
\end{remark}
\begin{remark}The results of this section can also be applied to the SPX, namely $\beta_R$ and the SVI fit from SPX options can be used to obtain the GPD for tail of the SPX index.
\end{remark}

\section{Examples}
\label{sec:examples}
This section explores some widely-used stochastic volatility models. In particular the MGF of $\VIX_T^2$ and negative moments of $S_T$ are computed for several models; these calculations give insight on how the results from Section \ref{sec:extremeStrikes} apply. Models such as the 3/2 and SABR have heavy-tailed volatility processes and both have $\VIX_T^2$ MGFs that become infinite on the positive real line, but each have different moment formula asymptotics for VIX implied volatility. Other models in this section include the constant elasticity of volatility (CEV), the Heston, and the exponential Ornstein-Uhlenbeck (OU). From the September 9, 2010 data it is seen approximately that VIX call options have right-hand extreme strikes with $\sqrt{\beta_R} \approx .4495$, as shown in Figure \ref{fig:cev2momentAndFit}.

\subsection{CEV Model}
\label{sec:CEV}
The following example shows how special care is required in using models where the asset price does not satisfy Condition \ref{cond:moments}. Take $r=0$,  $T=1$, and the process

\[dS_t = \sqrt{S_t}dW_t=\sigma_tS_tdW_t\qquad t\in[0,1]\]
with $S_0=1$ and $\sigma_t =\sigma(S_t)= \frac{1}{\sqrt{S_t}}$. This example is informative because it is a non-trivial case where a barrier (i.e. via a stopping time) is used to ensure $\VIX_t<\infty$ a.s. Without the stopping time there would be positive probability of infinite VIX, which would lead to infinite variance-swap rates.

The process $S_t$ is a true martingale because $\int_\epsilon^\infty \frac{1}{s\sigma^2(s)}ds=\infty$ for some $\epsilon>0$ (see \cite{MU2012,protter}), but was shown in \cite{feller} to have positive probability of hitting zero in finite time, that is, $\mathbb P(\min_{t\leq 1}S_t=0)>0$ with zero being an absorbing boundary. Therefore $\VIX^2 =\E \int_0^1\frac{1}{S_u}du=\infty$ and $\E S_t^{-q}=\infty$ for all $t\in(0,1]$ and all $q>0$. However, in a manner similar to the real-life variance-swap barriers discussed in \cite{carrLee2009}, caps on payoffs based on realized variance can be modeled by using the stopping time 
\[\mathcal T = \min\left\{t\geq0\Big|\int_0^t\frac{1}{S_u}du=M\right\}\]
and then by considering the stopped process $\tilde S_t = S_{t\wedge\mathcal T}$. The VIX on $\tilde S$ is then bounded,
\[\VIX =\sqrt{\E \int_0^{\mathcal T\wedge1}\frac{1}{S_u}du}\leq \sqrt M\ ,\]
which is sufficient to have the Novikov condition for the martingales
\[\mathcal Z_t^\pm = \exp\left(-\frac12\int_0^{\mathcal T\wedge t}\frac{1}{S_u}du\pm\int_0^{\mathcal T\wedge t}\frac{1}{\sqrt{S_u}}dW_u\right)\ .\]
Hence, $\tilde S_t = \mathcal Z_t^+$ and for any $q\in(0,1]$ the negative moments are
\begin{align*}
\E \tilde S_t^{-q}&=\E\left[(\mathcal Z_t^+)^{-q}\right]\\
& =\E \exp\left(\frac{q}{2}\int_0^{\mathcal T\wedge t}\frac{1}{S_u}du-q\int_0^{\mathcal T\wedge t}\frac{1}{\sqrt{S_u}}dW_u\right)\\
& =\E \left[(\mathcal Z_t^-)^q\exp\left(q\int_0^{\mathcal T\wedge t}\frac{1}{S_u}du\right)\right]\\
&\leq e^{qM}\E \left[(\mathcal Z_t^-)^q\right]\leq e^{qM}\E \left[\mathcal Z_t^-\right]^q= e^{qM}\\
&<\infty\ , \end{align*}
and so Condition \ref{cond:moments} holds for $\tilde S_t$.

\subsection{SABR Model}
\label{sec:SABR}
 This is another example of the contraposition of Proposition \ref{prop:negMoments_squareIntegrability} because Condition \ref{cond:int2} holds yet Condition \ref{cond:moments} does not because there are no negative moments on $S_t$. For $t\in[0,T+\tau]$, take $r=0$, $S_0=1$ and $Y_0>0$, and consider the SABR stochastic volatility model
\begin{align*}
d\log(S_t)& = -\frac12Y_t^2dt+Y_t\left(\sqrt{1-\rho^2}dW_t+\rho dB_t\right)\\
dY_t&=\alpha Y_tdB_t
\end{align*}
where $\alpha>0$, $W\perp B$, and $-1\leq \rho\leq -1$. It is well known that $S_t$ is a true martingale if and only if $\rho\leq0$ (see \cite{jourdain}), and for $\rho<0$ that $\E S_t^{1+p}<\infty$ if and only if $p\leq\frac{\rho^2}{1-\rho^2}$ (see \cite{jourdain,lionsMusiela2007}). The process $Y_t$ is log-normal and almost-surely positive, and clearly $\E Y_t^2<\infty$ implying that $\E \left(\int_0^tY_udB_u\right)^2<\infty$, which implies Condition \ref{cond:int2} and therefore the log-contract satisfies
\[-\E \log(S_t) = \frac{1}{2}\E \int_0^tY_u^2du<\infty\ .\]
Moreover, the VIX has all its moments
\begin{align*}
\E \VIX_T^{2n} &=\E \left(\E _T \frac{1}{\tau}\int_T^{T+\tau}Y_u^2du\right)^n\leq \frac{1}{\tau}\E \int_T^{T+\tau}Y_u^{2n}du\\
&= \frac{Y_0^{2n}}{\tau}\int_T^{T+\tau}e^{(2n^2-n)\alpha^2u}du\\
&<\infty\ ,
\end{align*}
yet has infinite MGF,
\[\E \exp\left(\xi\smallVIX_T\right)= \E \exp\left(\xi\sqrt{\frac{1}{\tau}\E _T\int_T^{T+\tau}Y_u^2du}\right)= \E \exp\left(\xi C_{\tau,\alpha}Y_T\right)=\infty\qquad \forall \xi>0\ ,\]
where $C_{\tau,\alpha} = \sqrt{\frac{1}{\tau Y_T^2}\E _T\int_T^{T+\tau}Y_u^2du}=\sqrt{\frac{1}{\tau\alpha^2}\left(e^{\alpha^2\tau}-1\right)}>0$. It follows that $\E e^{\xi\VIX_T^2}=\infty$ for all $\xi>0$. 

In summary, for $\rho<0$ the price $S_t$ is a martingale, there exists $\tilde p = \sup\{p\geq 0|\E S_{T+\tau}^{1+p}<\infty\}>0$, and $\E e^{\xi\VIX_T^2}=\infty$ for all $\xi>0$. Hence from equation \eqref{eq:MGFbound} it is deduced that $\tilde q=\sup\{q\geq 0|\E S_{T+\tau}^{-q}<\infty\} = 0$. The SABR model is good for fitting implied volatilities from SPX options data. However, SABR may have trouble fitting both SPX and VIX options because volatility modeled as geometric Brownian motion will produce a flat smile that does not have the upward skew observed from VIX options, i.e. $\VIX_T=C_{\tau,\alpha}Y_T$ and $C_{\tau,\alpha}\E_t(Y_T-K/C_{\tau,\alpha})^+$ leads to a flat implied-volatility smile because $Y_T$ is log-normally distributed.

This example has demonstrated the VIX-SPX relationship identified in Lemma \ref{lemma:MGFbound}, namely that $\E_te^{\xi\VIX_T^2}=\infty$ for all $\xi>0$ implies $\E_tS_{T+\tau}^{-q}=\infty$ for all $q>0$.

\subsection{CEV Volatility Process}
\label{sec:cev2vol}
This is an example of a model that is slightly better than SABR for pricing VIX options because it has implied-volatility function $\hat\nu(t,K,T)$ that is increasing in $K$. For $t\in[0,T+\tau]$, take $r=$0, $S_0=1$, and consider the stochastic volatility model
\begin{align*}
d\log(S_t)& = -\frac12Y_t^2dt+Y_t\left(\sqrt{1-\rho^2}dW_t+\rho dB_t\right)\\
dY_t&=cY_t^2dB_t
\end{align*}
where $c>0$ and $W\perp B$. The process $Y_t$ is almost-surely positive, a fact that can be deduced from its transition density (see the transition density in Appendix \ref{app:cev2}). 

Also from the transition density it is seen that $\E Y_t^p<\infty$ for $0\leq p\leq 3$ and $\E Y_t^p=\infty$ for $p>3$. Since  $\E Y_t^2<\infty$ it follows that $\E \left(\int_0^{T+\tau}Y_udB_u\right)^2<\infty$ and Condition \ref{cond:int2} holds, which means the log-contract satisfies
\[-\E \log(S_t) = \frac12\E \int_0^tY_u^2du<\infty\ .\]
If $\rho= 0$ or if $\rho<0$, then $S_t$ is a martingale (shown in Appendix \ref{app:cev2}); in terms of notation from Proposition \ref{prop:impVolLowerBound} this model also has finite, positive $\tilde\xi$ and $\tilde q$ (also shown in Appendix \ref{app:cev2}). 

\begin{figure}[htbp] 
\centering
\begin{tabular}{cc}
   \includegraphics[width=2.8in]{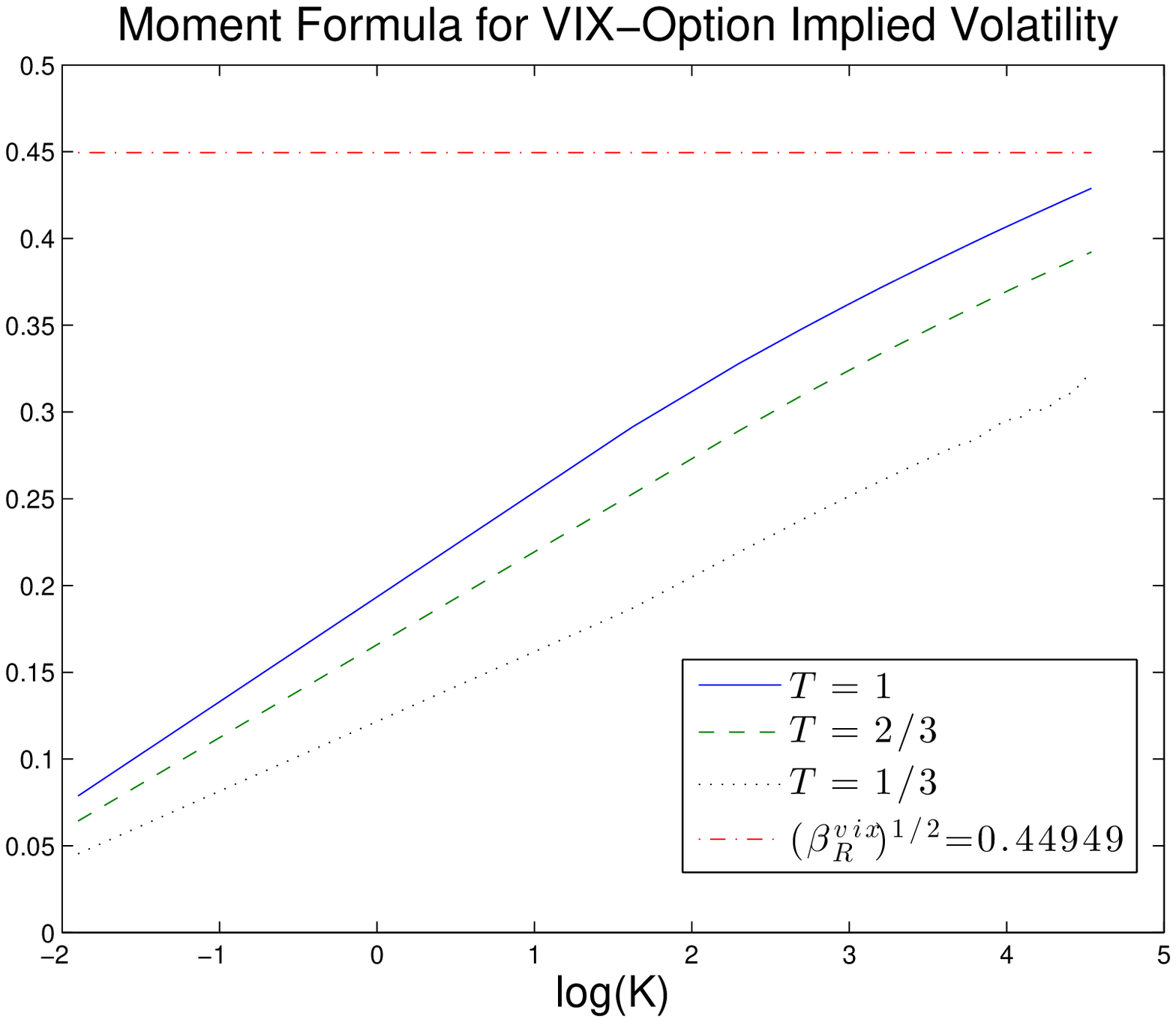} &
   \includegraphics[width=2.8in]{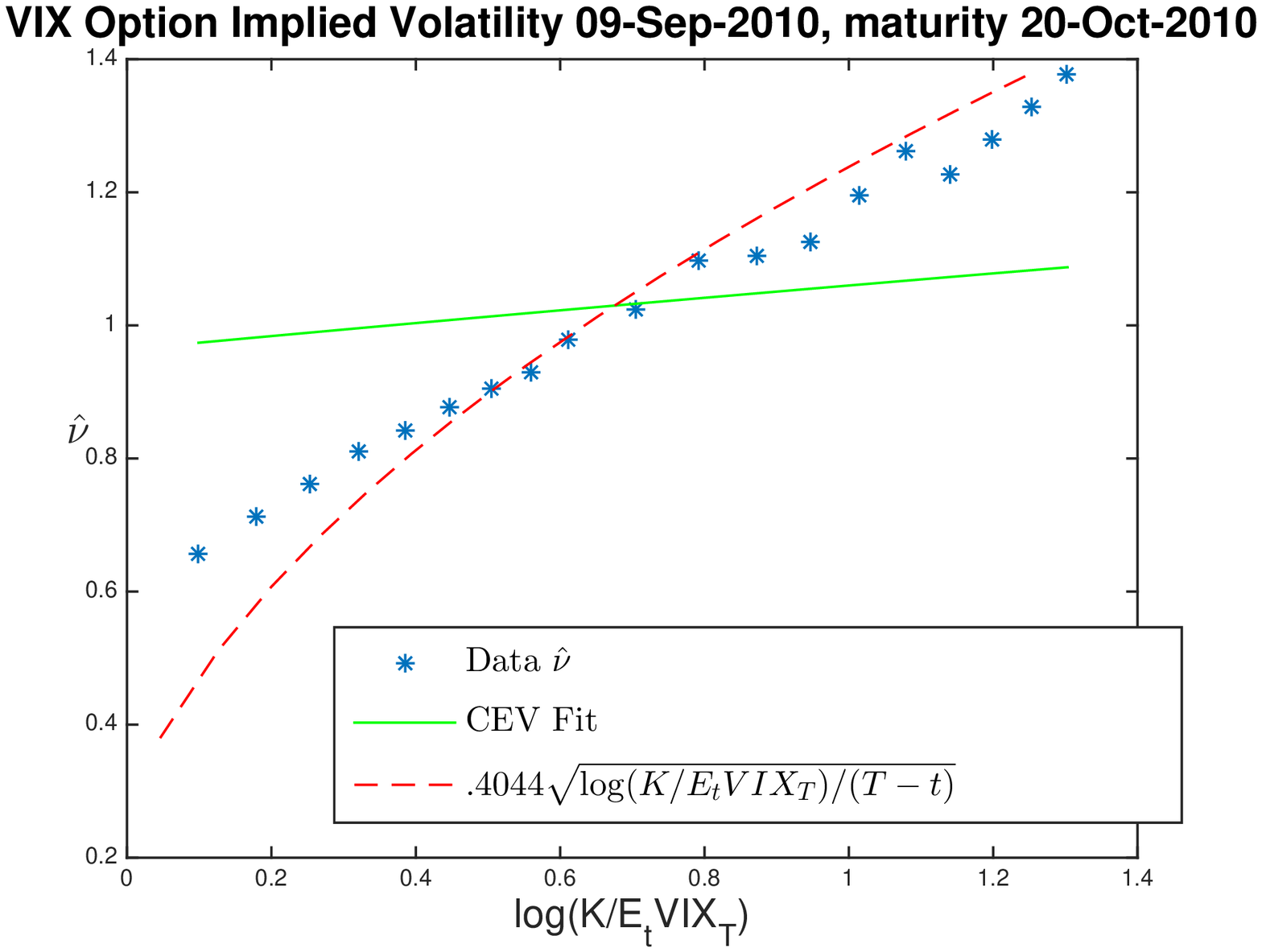} 
   \end{tabular}
   \caption{\textbf{Left:} Implied volatility for the CEV volatility process in Section \ref{sec:cev2vol}. The plot is of the scaled VIX implied volatility of September 9th, 2010, $\frac{\hat\nu(t,K,T)\sqrt{T-t}}{\sqrt{\log(K_{max}/\E _tS_T)}}\leq (\beta_R^{vix})^{1/2}$, for $Y_0 = .25$ and $K\leq K_{max} = 93.5$. The lines for $T=1,2/3$ and $1/3$ will come closer and closer to $\sqrt{\beta_R^{vix}} \approx 0.4495$ as $K_{max}$ increases. \textbf{Right:} The fit of the stochastic volatility model proposed in Section \ref{sec:cev2vol} is fit to the VIX options data from September 9th, 2010. The model captures some of upward skew in VIX implied volatility (certainly more than SABR which produces a flat implied volatility smile), yet cannot produce a steep enough slope to have a good fit to the data. The number $.4044$ is the lower bound on $\sqrt{\beta_R}$ that was identified in Figure \ref{fig:prelim_momentFormula}.}
   \label{fig:cev2momentAndFit}
\end{figure}

\subsection{The Heston Model}
\label{sec:heston}
The Heston model is interesting because there is some interplay between the quantities $\tilde q$ and $\tilde\xi$ from Proposition \ref{prop:impVolLowerBound}. Furthermore, one of the model's most interesting features is the role played by time in determining negative moments, namely, for any $q>0$ there exists $t^*\in(-\infty,T)$ such that $\E_tS_T^{-q}=\infty$ for $t\leq t^*$. However, this section will show that the Heston model's time dependence in tail risk is (for the most part) absent in VIX options.

The Heston model (with $r=0$) has squared volatility given by a Cox-Ingersol-Ross (CIR) process, so that
\begin{align*}
d\log(S_t)&=-\frac12Y_tdt+\sqrt{Y_t}dW_t\\
dY_t&=\kappa(\overline Y- Y_t)dt+\gamma\sqrt{Y_t}dB_t
\end{align*}
where $dW_tdB_t=\rho dt$ for $\rho\in(-1,0)$. It is also important to have the Feller condition, $\gamma^2\leq 2\overline Y \kappa$. 

As shown in \cite{andersenPiterbarg2007,sturm2010}, if $(q\gamma \rho +\kappa)^2<\gamma^2q(1+q)$ then the earliest time $t^*<T$ such that $\E_{t^*}S_T^{-q}=\infty$ for some $q>0$ is 
\[T-t^* =\frac{1}{\sqrt{\gamma^2q(1+q)-(q\gamma \rho +\kappa)^2}}\left(\pi\indicator{q\gamma \rho +\kappa>0}+ \tan^{-1}\left(-\frac{\sqrt{\gamma^2q(1+q)-(q\gamma \rho +\kappa)^2}}{q\gamma \rho +\kappa}\right)\right)\ .\]

For the VIX, the MGF $\E_te^{\xi\smallVIX_T^2}<\infty $ for some $\xi>0$, but there is also a point $\tilde\xi$ of explosion that can be calculated explicitly. Using the SDE for $Y$, the squared VIX is linear in $Y_t$, $\VIX_t^2 = \frac1\tau\E_t\int_t^{t+\tau}Y_udu = \overline Y+\frac{Y_t-\overline Y}{\kappa \tau}\left(1-e^{-\tau\kappa}\right)=a+bY_t$. The explicit transition density for $Y_t$ (as given in \cite{aitSahalia1999}) is 
\begin{equation}
\label{eq:CIRtrans}
\frac{\partial}{\partial y}\mathbb P(Y_T\leq y|Y_t=y_0) = ce^{-u-v}(v/u)^{\alpha/2}I_\alpha(2\sqrt{uv})\ ,
\end{equation}
where $c = 2\kappa/(\gamma^2(1-e^{-\kappa(T-t)}))$, $u=cy_0e^{-\kappa(T-t)}$, $v=cy$, $\alpha= 2\overline Y\kappa/\gamma^2-1\geq0$, and $I_\alpha$ is the modified Bessel function of the first kind of order $\alpha$. The function $e^{\xi b y}$ is integrable against this density for any $\xi b<c$, hence, $\E_te^{\xi \smallVIX_T^2}<\infty$ if and only if $\xi<\tilde\xi$ where
\begin{equation}
\label{eq:tildeXiheston}
\tilde\xi =\frac cb= \frac{2\kappa^2\tau}{\gamma^2(1-e^{-\kappa(T-t)})\left(1-e^{-\tau\kappa}\right)}\ .
\end{equation}
Note that for $T-t$ large $\frac{\partial}{\partial y}\mathbb P(Y_T\leq y|Y_t=y_0) = ce^{-u-v}(v/u)^{\alpha/2}I_\alpha(2\sqrt{uv})\sim \frac{y^\alpha e^{-\frac{2\kappa}{\gamma^2 }y}}{\left(\frac{\gamma^2}{2\kappa}\right)^{\alpha+1}\Gamma(\alpha+1)}$ (in fact, $Y_t$ converges weakly to a gamma-distributed random variable). Hence for large time the VIX's MGF is
\[\E_t e^{\xi b Y_T}\sim \left(1-\frac{\gamma^2\xi b}{2\kappa}\right)^{-(\alpha +1)}\qquad\hbox{for }T-t\gg 1/\kappa\ ,\]
which exists for $\xi <\frac{2\kappa}{\gamma^2 b}=\frac{2\kappa^2\tau}{\gamma^2 \left(1-e^{-\tau\kappa}\right)}$, in agreement with \eqref{eq:tildeXiheston}.

The analysis above is interesting because it shows how the Heston model has a time dynamic in SPX tail risk, but which is not as present in the VIX's distribution. In other words, there is an increase in tail risk because $\E_tS_T^{-q}=\infty$ for some finite $T>t$, yet the VIX's MGF exists for a segment of the positive real line for $T>t$. This is striking because it means the Heston-model gives VIX prices that do not capture the same long-term risk that is in the underlying.

Figure \ref{fig:hestonFit} shows how the Heston model can fit the SPX implied volatility, but has some difficulty in fitting the VIX implied volatility. In particular, the CIR process of the Heston model leads to a downward slope in VIX option implied volatility, which is the stylistic feature pointed out in \cite{drimus2012,gatheralSlides, papanicolaouSircar2014} and shows the CIR process is not a good fit to the VIX data.

\begin{figure}[!htbp] 
\centering
\begin{tabular}{cc}
   \includegraphics[width=2.8in]{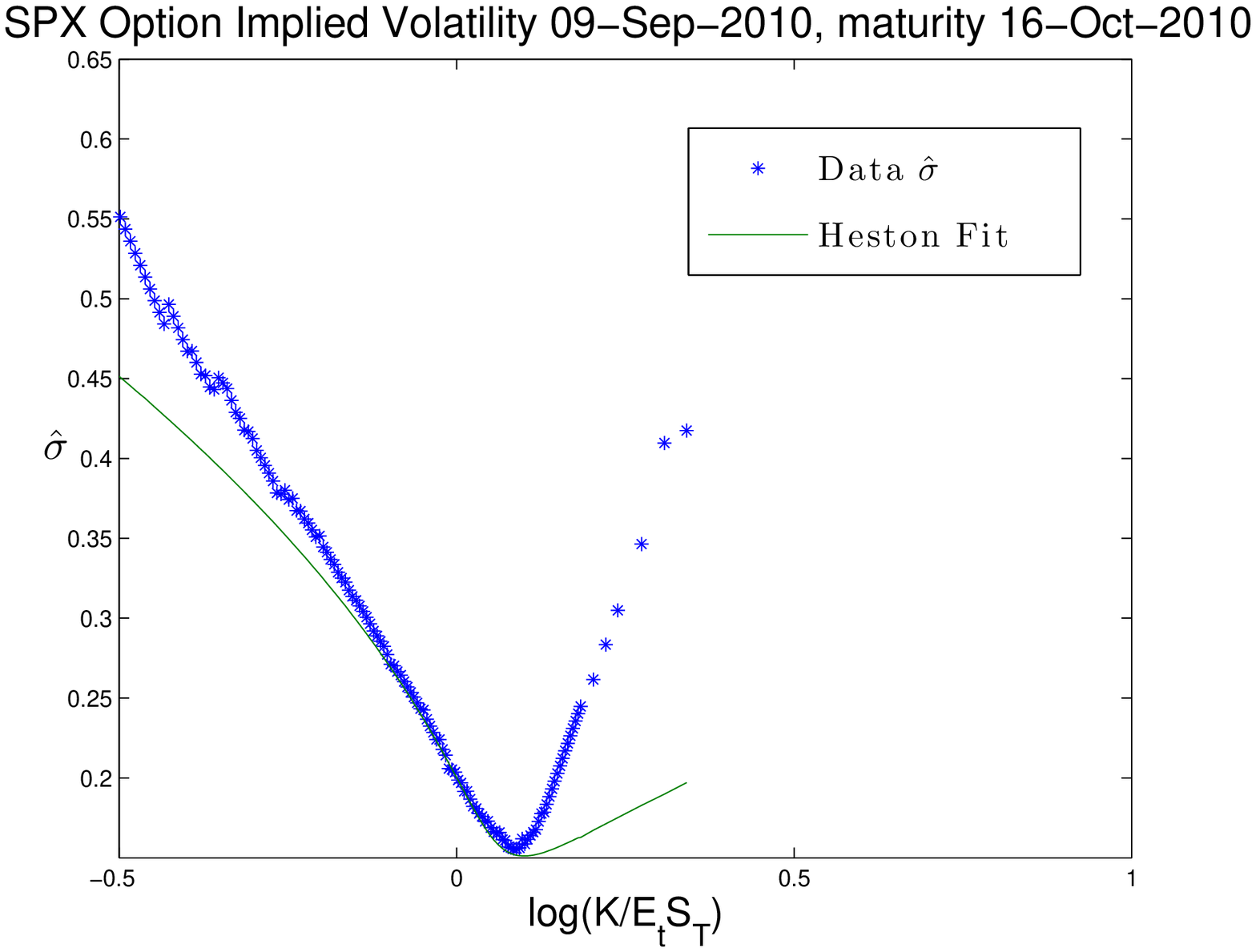} &
   \includegraphics[width=2.8in]{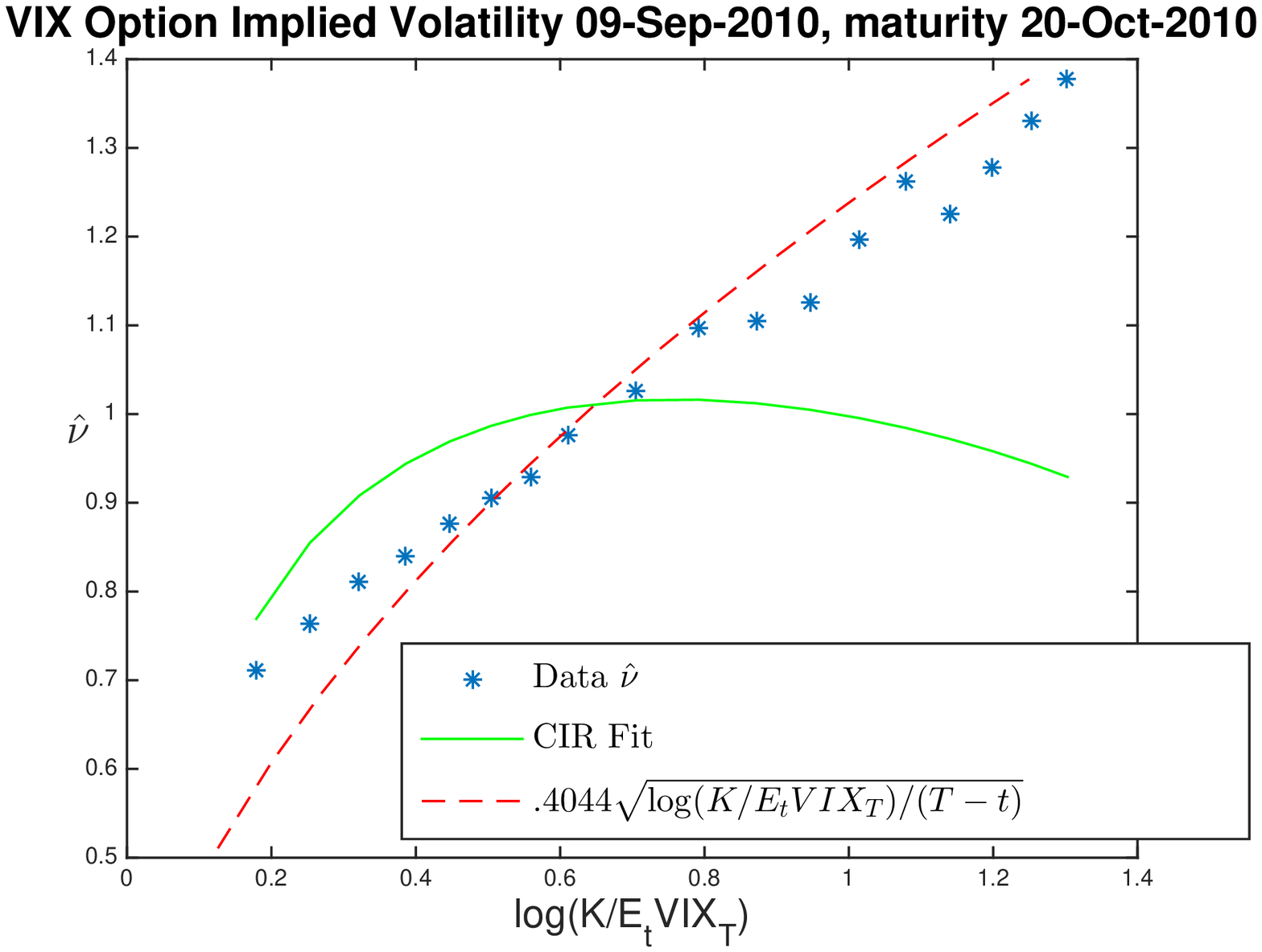} 
   \end{tabular}
   \caption{\textbf{Left:} A fit of the Heston model to the implied volatility smile from SPX options on September 9th, 2010. \textbf{Right:} A fit of the Heston model (or simply a CIR process) to the implied volatility from VIX options. The downward slope in the right-hand skew illustrates the CIR process' inability to fit VIX options on September 9th, 2010.}
   \label{fig:hestonFit}
\end{figure}

The Heston model has $\tilde\xi>0$ and $\beta_L<2$, yet part 1 of Proposition \ref{prop:impVolLowerBound} tends to be uninformative because $\sqrt{\left(\frac{4\tilde \xi}{\tau}\right)^2+\frac{4\tilde \xi}{\tau}}-\frac{4\tilde \xi}{\tau}$ is a small number for Heston calibrations having $\overline Y$ within the realm of what is (historically) observed in the data. In general, Proposition \ref{prop:impVolLowerBound} is model free, which means there is an underlying tradeoff between sharpness and specification of the model. Sharper estimates based on Proposition \ref{prop:impVolLowerBound} can be obtained for Heston and other specific volatility models.

\subsection{The Exponential Ornstein-Uhlenbeck (Exp-OU) Model} 
\label{sec:ExpOU}
The Exp-OU model is 
\begin{align*}
d\log(S_t)&=-\frac12e^{2Y_t}dt+e^{Y_t}dW_t\\
dY_t&=\kappa\left(\overline Y-Y_t\right)dt+\gamma dB_t\ ,
\end{align*}
where $\overline Y\in\mathbb R$, $\kappa,\gamma>0$, and $dW_tdB_t=\rho dt$ for $\rho\in(-1,1)$. Clearly $Y_t$ is Gaussian with $\E Y_t = Y_0e^{-\kappa t}+\overline Y(1-e^{-\kappa t})$ and $\hbox{Var}(Y_t) = \frac{\gamma^2}{2\kappa}\left(1-e^{-2\kappa t}\right)$, so Condition \ref{cond:int2} is satisfied,
\[\E\int_0^Te^{2Y_t}dt=\int_0^Te^{2\E Y_t+2\hbox{\small Var}(Y_t)}dt<\infty\ .\]
Moreover, all the moments of $\VIX_T^2$ exist, as
\[\E\VIX_T^{2n}=\E\left(\E_T\frac1\tau\int_T^{T+\tau}e^{2Y_t}dt\right)^n \leq\frac{1}{\tau}\E \int_T^{T+\tau}e^{2nY_t}dt<\infty\ ,\]
for all $n\geq 1$. However, the MGF of $\VIX_T^2$ does not exist for $\xi>0$,
\begin{align*}
\E \exp\left(\frac{\xi}{\tau}\E_T\int_T^{T+\tau}e^{2 Y_t}dt\right)&\geq\E \exp\left(\frac{\xi}{\tau}\int_T^{T+\tau}e^{2\E_T Y_t}dt\right)\\
&= \E \exp\left(\frac{\xi}{\tau}\int_T^{T+\tau}e^{2Y_Te^{-\kappa (t-T)}+2\overline Y(1-e^{-\kappa (t-T)})}dt\right)\\
&\geq \E \exp\left(\frac{\xi}{\tau}\int_T^{T+\tau}e^{2Y_Te^{-\kappa (t-T)}+2\overline Y(1-e^{-\kappa (t-T)})}dt\right)\indicator{Y_T>0}\\
&\geq \E \exp\left(e^{2Y_Te^{-\kappa\tau}}\left(\frac{\xi }{\tau}\int_T^{T+\tau}e^{2\overline Y(1-e^{-\kappa (t-T)})}dt\right)\right)\indicator{Y_T>0}\\
&=\infty\ ,
\end{align*}
because the MGF of log-normal random variable $e^{2Y_Te^{-\kappa\tau}}$ does not exist on the positive real line.


\subsection{The 3/2 Model} 
\label{sec:3/2}
The 3/2 model is 
\begin{align*}
d\log(S_t)&=-\frac12Z_tdt+\sqrt{Z_t}dW_t\\
dZ_t&=Z_t\left(\kappa-(\kappa\overline Y-\gamma^2)Z_t\right)dt-\gamma Z_t^{3/2}dB_t\ ,
\end{align*}
where $dW_tdB_t=\rho dt$ for $\rho\in(-1,1)$, $\kappa>0$, $2\overline Y\kappa> \gamma^2$, and $\kappa \overline Y -\rho\gamma\geq \frac12\gamma^2$ so that the price process is a martingale (see \cite{bernard2017,drimus2012}), and hence Condition \ref{cond:int2} holds up to time $T+\tau$. This model can be equivalently written as $d\log(S_t)=-\frac{1}{2Y_t}dt+\frac{1}{\sqrt{Y_t}}dW_t$ where $Y$ is the square-root (CIR) process from Section \ref{sec:heston} and $Z_t=1/Y_t$. This is a popular choice for pricing VIX options because the volatility process has heavy tails (see \cite{badranBaldeaux2013,drimus2012}). A fit of the 3/2 model to VIX option implied volatility is shown in Figure \ref{fig:3on2Fit}. The upper bound $\tilde q$ such that $\E S_T^{-q}=\infty$ if $q\geq\tilde q$ is calculated in Appendix \ref{app:3halfsNegativeMoments} to be $\tilde q= \frac{\sqrt{1+\gamma^2\alpha^2}-1}{2}$. The following proposition calculates $\tilde \xi$ that is the maximum value for which there is existence of the MGF for $\VIX_T^2$:

\begin{proposition}
\label{prop:3/2MGF}
Let $Y_t$ denote the CIR process from Section \ref{sec:heston} having parameters $\kappa,\overline Y, \gamma$ with $\alpha = \frac{2\overline Y\kappa}{\gamma^2}-1$ (note that $\alpha$ is positive because it was assumed above that $2\overline Y\kappa> \gamma^2$). For $\xi>0$, 
\[\E e^{\xi\VIX_T^2}<\infty\ ,\]
if and only if  
$\xi<\tilde \xi =\frac{\gamma^2\tau\alpha(\alpha+1)}{2}$ (recall the notation $\tilde \xi$ from Proposition \ref{prop:impVolLowerBound}).
\end{proposition}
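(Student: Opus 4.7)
The plan is to reduce the MGF to an integrability question against the CIR transition density \eqref{eq:CIRtrans} of $Y_T$. By the time-homogeneous Markov property of $Y$, the inner conditional expectation that defines $\VIX_T^2$ depends on $\mathcal F_T$ only through $Y_T$:
\[
\VIX_T^2 = \frac{g(Y_T)}{\tau}, \qquad g(y) := \E_y\!\left[\int_0^\tau \frac{ds}{Y_s}\right],
\]
so $\E e^{\xi \smallVIX_T^2} = \E e^{\xi g(Y_T)/\tau}$. The problem reduces to (i) extracting the small-$y$ asymptotic of $g$ and (ii) integrating $e^{\xi g(y)/\tau}$ against the density of $Y_T$.

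First I would obtain a closed form for $\E_y[Y_s^{-1}]$. Expanding $I_\alpha(2\sqrt{uv})$ as a power series in $v$ recasts the density \eqref{eq:CIRtrans} as a Poisson mixture of Gamma$(\alpha{+}k{+}1,1)$ distributions in the rescaled variable $v = cy'$; multiplying by $1/y' = c/v$ and integrating term by term (using $\E V^{-1}=1/(\alpha+k)$ for $V\sim$ Gamma$(\alpha{+}k{+}1,1)$) gives, after Kummer's transformation,
\[
\E_y[Y_s^{-1}] = \frac{c(s)}{\alpha}\,\IFI(1;\alpha+1;-u_1(s,y)),
\]
with $c(s) = 2\kappa/(\gamma^2(1-e^{-\kappa s}))$ and $u_1(s,y) = c(s)\,y\,e^{-\kappa s}$. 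Integrating in $s$ and making the substitution $u=u_1(s,y)$ collapses the time integral to a clean one-dimensional integral
\[
g(y) = \frac{2}{\alpha\gamma^2}\int_{u^*(y)}^{\infty}\frac{\IFI(1;\alpha+1;-u)}{u}\,du,\qquad u^*(y) = \frac{2\kappa y}{\gamma^2(e^{\kappa\tau}-1)}.
\]

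Next I would extract the asymptotic $g(y) = \frac{2}{\alpha\gamma^2}\log(1/y) + O(1)$ as $y\to 0^+$. The integral representation $\IFI(1;\alpha+1;-u) = \alpha\int_0^1 e^{-ut}(1-t)^{\alpha-1}\,dt$ shows the integrand is positive, tends to $1/u$ as $u\downarrow 0$, and is of order $\alpha/u^2$ as $u\to\infty$. Interchanging the order of integration recasts $g$ in terms of the exponential integral $E_1$, whose expansion $E_1(z)\sim -\log z$ at the origin yields precisely the stated logarithmic divergence. For the density of $Y_T$, the same small-argument expansion of $I_\alpha$ already used in Section \ref{sec:heston} gives $p_T(y_0,y)\sim C_1\, y^\alpha$ as $y\to 0^+$, with exponential decay at infinity.

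Combining these, the integrand of the MGF satisfies
\[
e^{\xi g(y)/\tau}\,p_T(y_0,y) \asymp y^{\alpha - 2\xi/(\alpha\gamma^2\tau)}\qquad\hbox{as $y\to 0^+$,}
\]
which is integrable near $0$ if and only if $\alpha - 2\xi/(\alpha\gamma^2\tau) > -1$, i.e.\ $\xi < \gamma^2\tau\alpha(\alpha+1)/2 = \tilde\xi$. Since $g(y)=O(1/y)$ as $y\to\infty$ (from the same integral formula, now using $\IFI(1;\alpha+1;-u)\sim\alpha/u$) and $p_T$ decays exponentially there, integrability at infinity is automatic, so the small-$y$ analysis produces both directions of the equivalence. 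The main obstacle will be the matched asymptotics for $g$: to upgrade the heuristic $\asymp$ into a strict equivalence at the critical exponent, one must control the $O(1)$ remainder uniformly through the crossover region $u\sim 1$, where neither the small-$u$ nor the large-$u$ expansion of $\IFI(1;\alpha+1;-u)$ dominates.
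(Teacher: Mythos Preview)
Your proposal is correct and follows essentially the same route as the paper: both compute the inner expectation $\E_y[Y_s^{-1}]$ via the confluent hypergeometric function, integrate in $s$ to obtain $g(y)$ as an exponential-integral--type expression, identify the logarithmic divergence $g(y)\sim\frac{2}{\alpha\gamma^2}\log(1/y)$ as $y\downarrow 0$, and then reduce finiteness of $\E e^{\xi g(Y_T)/\tau}$ to the question of whether $\E Y_T^{-2\xi/(\alpha\gamma^2\tau)}<\infty$, which by the CIR density holds iff the exponent is below $\alpha+1$.

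The one execution difference worth noting is how the two arguments close the iff at the critical exponent. You work asymptotically and then flag the need to control the $O(1)$ remainder uniformly through the crossover region; the paper instead derives hard two-sided inequalities by splitting the inner integral at $u=1$ and using the elementary bounds $1-u\leq e^{-u}\leq 1$ on $[0,1]$, which sandwiches $\E e^{\xi\smallVIX_T^2}$ between constant multiples of $\E\big[Y_T^{-2\xi/(\alpha\gamma^2\tau)}\indicator{Y_T\leq M}\big]$ (up to bounded factors). That device delivers both directions of the equivalence without any matched-asymptotics bookkeeping, and in particular handles the boundary case $\xi=\tilde\xi$ cleanly. Your route reaches the same conclusion once you observe that $\IFI(1;\alpha+1;-u)-1=O(u)$ near zero makes the remainder genuinely bounded, so the ``obstacle'' you anticipate is minor.
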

\begin{proof}(See Appendix \ref{app:proofProp3/2MGF}).
\end{proof}

In general,  the SABR model of Section \ref{sec:SABR}, the CEV volatility model of Section \ref{sec:cev2vol}, the Exp-OU model of Section \ref{sec:ExpOU}, and the 3/2 models are candidates for an improved fit to the VIX data because the volatility process has heavier tail, whereas the Heston model's CIR process does not capture the right-hand skew in VIX implied volatility. However, the Exp-OU and the 3/2 model appear to be the best for fitting to VIX-option implied volatility, as the SABR  model has flat smile for VIX options, and the CEV volatility model has relatively little skew for VIX options (see right-hand plot in Figure \ref{fig:cev2momentAndFit}). Figure \ref{fig:3on2Fit} shows evidence that the 3/2 model can capture some of the right-hand skew in the VIX data. The moment relationships from the examples of Sections \ref{sec:CEV} to \ref{sec:ExpOU} are summarized in Table \ref{tab:summary}.

\begin{figure}[htbp] 
   \centering
   \includegraphics[width=4.3in]{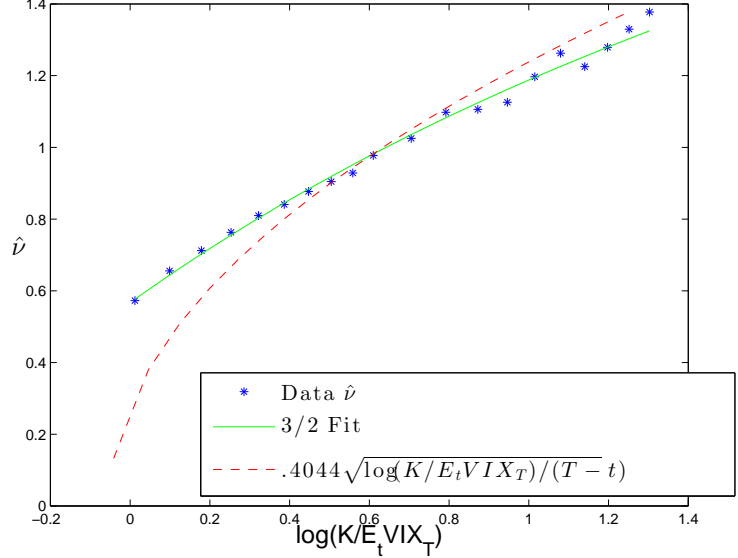} 
   \caption{The 3/2 model fit to implied volatility from VIX options. The heavy-tailed 3/2 process for volatility captures the increasing right-hand skew. The right-hand skew is the stylistic feature that is seen through the VIX data, and is important to capture when selecting a stochastic volatility model for pricing VIX options. See \cite{drimus2012, gatheralSlides,gatheral2013,papanicolaouSircar2014} for more insight on models that fit the right-hand skew. The fit uses approximated VIX formula: $\VIX_T^2\approx \frac{1}{\tau}\E_T\int_T^{T+\tau}\left[\frac{1}{ Y_T}-\frac{Y_{T+u}-Y_T}{Y_T^2}+\frac{(Y_{T+u}-Y_T)^2}{Y_T^3}\right]du$.}
   \label{fig:3on2Fit}
\end{figure}

\begin{table}[htbp]
\centering
\begin{tabular}{|p{4cm}|p{2cm}|p{3.8cm}|p{3cm}| } 
 \hline
Model &$\E\VIX_T^p=\infty$&$\E e^{\xi \smallVIX_T^2} =\infty$  &$\E S_{T+\tau}^{-q}=\infty$\\
\hline
\hline
CEV model, $dS=S^adW$ with $0\leq a<1$&$\forall p>0$&$\forall\xi>0$&$\forall q>0$\\
 \hline
SABR with $\rho\leq 0$&$p=\infty$&$\forall\xi>0$&$\forall q>0$\\\hline
CEV volatility, $dS = S\sqrt YdW$ and $dY=cY^2dB$, with $dWdB=\rho dt$ and $\rho\leq 0$&$\forall p>3$&$\forall\xi> \frac{3\tau c^2}{2}$&$\forall q>\frac{\sqrt{1+c^2}-1}{2}$\\
\hline
 Heston Model&$p=\infty$&$\forall\xi\geq \frac{2\kappa^2\tau}{\gamma^2(1-e^{-\kappa T})\left(1-e^{-\kappa\tau}\right)}$&$\forall q>0$ such that $(q\gamma \rho +\kappa)^2<\gamma^2q(1+q)$ and $T$+$\tau\geq T^*(q)$\\
  [.2cm]\hline 
  $3/2$ model&$ p =\infty$&$\forall
  \xi\geq\frac{\gamma^2\tau\alpha(\alpha+1)}{2}$ 
  &$\forall q>\frac{\sqrt{1+\gamma^2\alpha^2}-1}{2}$;\\
  \hline
  Exp-OU model&$p=\infty$&$\forall \xi>0$&$\forall q>0$\\
  \hline
\end{tabular}
\caption{A table to summarize the moment relationships from the examples in Section \ref{sec:examples}. For the Heston model, the function $T^*(q)=\frac{1}{\sqrt{\gamma^2q(1+q)-(q\gamma \rho +\kappa)^2}}\left(\pi\indicator{q\gamma \rho +\kappa>0}+ \tan^{-1}\left(-\frac{\sqrt{\gamma^2q(1+q)-(q\gamma \rho +\kappa)^2}}{q\gamma \rho +\kappa}\right)\right)$; when $(q\gamma \rho +\kappa)^2<\gamma^2q(1+q)$ $T^*$ is the moment of explosion computed in \cite{andersenPiterbarg2007,sturm2010}. For the $3/2$ model, $\alpha= \frac{2\overline Y\kappa}{\gamma^2}-1>0$ as in Section \ref{sec:3/2}.} 
\label{tab:summary}
\end{table}


\subsection{Inclusion of Jumps} 
\label{sec:jumps}
This final example will show how the theory presented in earlier sections can be generalized to include an independent jump process. Specifically, an independent jump term is added to the log-returns model presented in equation \eqref{eq:returns}. Addition of these jumps will cause some slight changes to statements of Sections \ref{sec:framework} and \ref{sec:extremeStrikes}, but the main results of the paper remain intact. The main differences are that variance swaps will no longer be equal to the squared VIX (i.e. the relationship in equation \eqref{eq:noMultiplier} will change), and Proposition \ref{prop:negMoments_squareIntegrability} will need to be modified. 

Consider two more stochastic processes: a Poisson arrival process $N_t$ with intensity $\lambda\in(0,\infty)$, and an i.i.d. jump process $Y_i\in\mathbb R$ with $\E e^{Y_i}<\infty$ for $i=1,2,3,\dots$, where $N$ and $Y$ are independent of each other and jointly independent of $(W,\sigma)$. Equation \eqref{eq:returns} is modified to include jumps as follows:
\begin{equation*}
\label{eq:jumpReturns}
\log(S_t/S_0)=\int_0^t\left(r-\mu-\frac12\sigma_u^2\right)du+ \int_0^t\sigma_udW_u+\sum_{i=1}^{N_t}Y_i\ ,
\end{equation*}
where $\mu$ is a compensator so that $S_te^{-rt}$ is a local martingale. The quadratic variation of $\log(S_t)$ for the continuous-time model was $\int_0^t\sigma_u^2du$, but for this model is $\int_0^t\sigma_u^2du+\sum_{i=1}^{N_t}Y_i^2$. This model is part of the general class of L\'evy jump diffusions where variance swaps and the log contract differ by a jump premium. For the example presented above it was shown in \cite{carrWu} that the premium is an additive term, 
\[\hbox{variance-swap rate} =\frac1\tau\E_t\int_t^{t+\tau}\sigma_u^2du+\lambda \E Y_i^2 = \VIX_t^2 -2\lambda\E\left[e^{Y_i}-1-Y_i-\frac12Y_i^2\right]\ , \]
Alternatively, it is shown in \cite{carrLeeWu2011} that if log-returns can be written as a continuous time change of a L\'evy process $(X_t)_{0\leq t\leq 1}$, then the variance-swap rate is a multiplier of the log contract 
\[\hbox{variance-swap rate} =-\frac{Q_x}{\tau}\E_t\log(S_{t+\tau}/S_t)=\frac{Q_x}{2}\VIX_t^2\ , \]
where $Q_x = \frac{\hbox{\small Var}(X_1)}{\log\E e^{X_1}-\E X_1}$ is the multiplier. To generalize, Proposition \ref{prop:negMoments_squareIntegrability} is not too difficult to modify provided that $\log(S_t)$'s quadratic variation has been restated for jumps; independent jumps don't affect Lemma \ref{lemma:MGFbound}, the moment formulas, Proposition \ref{prop:impVolLowerBound}, Proposition \ref{prop:MGFhedge}, or Proposition \ref{prop:EVT}.
\section{Conclusions}
\label{sec:conclusions}
This paper has explored some basic relationships between the markets for SPX and VIX options. The main idea is based on the notion that high-strike VIX call options can be used to hedge tail risk in the SPX. The moment formula was applied to relate the extreme-strike options on the SPX and VIX, and some formulas for comparison were introduced. The primary focus was the relationship between negative moments in SPX and the interval of the positive real line where the MGF of VIX-squared is finite. Negative moments and the MGF were computed for various stochastic volatility models, giving a sense of what can be accomplished with different models.

This paper is a step towards a theory for unifying option pricing of VIX and SPX contracts under a single stochastic volatility model. There are studies in the literature that have demonstrated better fits by richer models, but still there must be special care taken when using a single model for pricing both SPX and VIX options. The primary contribution of this paper is a model-free understanding of links between these two options markets. 

Possible future work will be refined arbitrage bounds for specific models, and more development of the model-free framework with the identification of no-arbitrage bounds that can be enforced with static hedging.

\appendix

\section{3/2 Model}

\subsection{Proof of Proposition \ref{prop:3/2MGF}}
\label{app:proofProp3/2MGF}
It is shown in \cite{ahnGao1999} that
\[\E_01/Y_t = \frac{1}{\alpha}\zeta_te^{-Y_0\nu_t}\IFI(\alpha,1+\alpha,Y_0\nu_t)\ ,\]
where $\alpha= \frac{2\overline Y\kappa}{\gamma^2}-1>0$ as in Section \ref{sec:heston}, with
\begin{align*}
\zeta_t &= \frac{2\kappa}{\gamma^2(1-e^{-\kappa t})}\\
\nu_t&=\zeta_te^{-\kappa t}\ ,
\end{align*}
and where $\IFI(\alpha,1+\alpha,\nu)$ is the confluent hypergeometric function,
\[\IFI(\alpha,1+\alpha,\nu) = \frac{\Gamma(1+\alpha)}{\Gamma(\alpha)}\int_0^1e^{\nu r}r^{\alpha-1}dr=\alpha\int_0^1e^{\nu r}r^{\alpha-1}dr\ .\]
The moment generating function is,
\begin{align*}
\E_te^{\xi\smallVIX_T^2}& = \E_t\exp\left(\frac{\xi}{\tau}\int_T^{T+\tau}\E_T\frac{1}{Y_u}du\right)\\
& = \E_t\exp\left(\frac{\xi}{\alpha\tau}\int_T^{T+\tau}\zeta_{u-T}e^{-Y_T\nu_{u-T}}\IFI(\alpha,1+\alpha,Y_T\nu_{u-T})du\right)\\
& = \E_t\exp\left(\frac{\xi}{\tau} \int_0^1r^{\alpha-1}\int_T^{T+\tau}\zeta_{u-T}e^{-(1-r)Y_T\nu_{u-T}}dudr\right)\\
&=\E_t\exp\left(\frac{-2\xi }{\gamma^2\tau}\int_0^1r^{\alpha-1}Ei\left(-\frac{2(1-r)Y_T}{\gamma^2(e^{\kappa(u-T)}-1)}\right)\Big|_{u=T}^{T+\tau}dr\right)\ ,
\end{align*}
where $Ei(x) = -\int_{-x}^\infty \frac{1}{r}e^{-r}dr$ i.e. the exponential integral. Hence, letting $C$ be a constant that contains terms not involving $Y_T$, 
\begin{align}
\nonumber
&\E_te^{\xi\smallVIX_T^2}\\
\nonumber
&=\E_t\exp\left(\frac{2\xi }{\gamma^2\tau}\int_0^1r^{\alpha-1}\int_{\frac{2(1-r)Y_T}{\gamma^2(e^{\kappa\tau}-1)}}^\infty\frac{1}{u}e^{-u}dudr\right)\\
\nonumber
&\leq C \E_t\exp\left(\frac{2\xi }{\gamma^2\tau}\int_0^1r^{\alpha-1}\left(\int_{1\wedge\frac{2(1-r)Y_T}{\gamma^2(e^{\kappa\tau}-1)}}^1\frac{1}{u}du\right)dr\right)\\
\label{eq:3halfMGFVIXupperBound}
&= C\E_t\exp\left(\frac{2\xi }{\gamma^2\tau}\int_0^1r^{\alpha-1}\left(-\log\left(1\wedge\frac{2(1-r)Y_T}{\gamma^2(e^{\kappa\tau}-1)}\right)\right)dr\right)\\
\nonumber
&= C\E_t\exp\left(\frac{2\xi }{\gamma^2\tau}\int_{0\vee(1-\frac{\gamma^2(e^{\kappa\tau}-1)}{2Y_T})}^1r^{\alpha-1}\left(-\log\left(\frac{2(1-r)}{\gamma^2(e^{\kappa\tau}-1)}\right)-\log(Y_T)\right)dr\right)\\
\nonumber
&=C \E_t\left(Y_T\right)^{-\frac{2\xi }{\gamma^2\tau\alpha}\left(1-\left(0\vee(1-\frac{\gamma^2(e^{\kappa\tau}-1)}{2Y_T})\right)^\alpha\right)}\\
\nonumber
&\hspace{2cm}\times\exp\left(\frac{2\xi }{\gamma^2\tau}\int_{0\vee(1-\frac{\gamma^2(e^{\kappa\tau}-1)}{2Y_T})}^1r^{\alpha-1}\left(-\log\left(\frac{2(1-r)}{\gamma^2(e^{\kappa\tau}-1)}\right)\right)dr\right) \ .
\end{align}
Since it is assumed that $\alpha>0$ it follows that
\[\exp\left(\frac{2\xi }{\gamma^2\tau}\int_{0\vee(1-\frac{\gamma^2(e^{\kappa\tau}-1)}{2Y_T})}^1r^{\alpha-1}\left(-\log\left(\frac{2(1-r)}{\gamma^2(e^{\kappa\tau}-1)}\right)\right)dr\right)<\infty\qquad\hbox{a.s. }\ .\]
In addition, for $\xi>0$ it is straightforward to check that
\begin{align*}
&\E_t\left(Y_T\right)^{-\frac{2\xi }{\gamma^2\tau\alpha}\left(1-\left(0\vee(1-\frac{\gamma^2(e^{\kappa\tau}-1)}{2Y_T})\right)^\alpha\right)}
<\infty\\
&~~~~~~\hbox{iff}\\
&\E_t\left(Y_T\right)^{-\frac{2\xi }{\gamma^2\tau\alpha}}<\infty\ .
\end{align*}
Hence, the MGF of $\VIX_T^2$ is finite if $\E_t\left(Y_T\right)^{-\frac{2\xi }{\gamma^2\tau\alpha}}<\infty$. Moreover, it can be checked using the density of equation \eqref{eq:CIRtrans} that moments of $1/Y_t$ are infinite iff $\frac{2\xi }{\gamma^2\tau\alpha}\geq \alpha+1$ (i.e. iff $\xi\geq\frac{\gamma^2\tau\alpha(\alpha+1)}{2}$). Hence, the MGF of $\VIX_T^2$ is finite if $\frac{2\xi }{\gamma^2\tau}<\alpha(\alpha+1)$, which is a sufficient condition and shows one direction of the `iff' statement of the proposition.

In fact, $\frac{2\xi }{\gamma^2\tau}< \alpha(\alpha+1)$ is a necessary condition for finiteness of the MGF: there is a constant $C$ independent of $Y_T$ such that
\begin{align*}
&\E_te^{\xi\smallVIX_T^2}\\
&=\E_t\exp\left(\frac{2\xi }{\gamma^2\tau}\int_0^1r^{\alpha-1}\int_{\frac{2(1-r)Y_T}{\gamma^2(e^{\kappa\tau}-1)}}^\infty\frac{1}{u}e^{-u}dudr\right)\\
&= C \E_t\exp\left(\frac{2\xi }{\gamma^2\tau}\int_0^1r^{\alpha-1}\left(\int_{1\wedge\frac{2(1-r)Y_T}{\gamma^2(e^{\kappa\tau}-1)}}^1\frac{1}{u}e^{-u}du\right)dr\right)\\
&\geq C \E_t\exp\left(\frac{2\xi }{\gamma^2\tau}\int_0^1r^{\alpha-1}\left(\int_{1\wedge\frac{2(1-r)Y_T}{\gamma^2(e^{\kappa\tau}-1)}}^1\frac{1-u}{u}du\right)dr\right)\qquad\qquad (e^{-u}\geq 1-u\ ,~~\forall u\geq 0)\\
&= C\E_t\exp\left(\frac{2\xi }{\gamma^2\tau}\int_0^1r^{\alpha-1}\left(-\log\left(1\wedge\frac{2(1-r)Y_T}{\gamma^2(e^{\kappa\tau}-1)}\right)-\left(1-1\wedge\frac{2(1-r)Y_T}{\gamma^2(e^{\kappa\tau}-1)}\right)\right)dr\right)\\
&\geq C\E_t\exp\left(\frac{2\xi }{\gamma^2\tau}\int_0^1r^{\alpha-1}\left(-\log\left(1\wedge\frac{2(1-r)Y_T}{\gamma^2(e^{\kappa\tau}-1)}\right)-1\right)dr\right)\ ,
\end{align*}
which is, up to the term $-1$ in the integral, the same quantity as that in \eqref{eq:3halfMGFVIXupperBound}, and is infinite if $\frac{2\xi }{\gamma^2\tau}\geq \alpha(\alpha+1)$.

\subsection{Negative Moments}
\label{app:3halfsNegativeMoments}
It can also be shown that there are some negative moments of $S_T$. Following \cite{carrSun}, if the MGF of $\int_0^T\frac{1}{Y_t}dt$ exists then it is given by the formula
\begin{equation}
\label{eq:3on2RVmgf}
\E \exp\left(\xi \int_0^T\frac{1}{Y_t}dt \right)= \frac{\Gamma(b-a)}{\Gamma(b)}\left(\frac{2}{\gamma^2\phi}\right)^a \IFI\left(a,b,-\frac{2}{\gamma^2\phi}\right)\ ,
\end{equation}
where
\begin{align*}
\phi&=\frac{Z_0(e^{\kappa T}-1)}{\kappa}\\
a& = -\frac{\alpha}{2}+\frac12\sqrt{\alpha^2-\frac{8\xi}{\gamma^2}}\\
b &=2\left(\frac12+a+\frac{\alpha}{2}\right)\\
\alpha&=\frac{2\overline Y\kappa}{\gamma^2}-1\ . 
\end{align*}
This formula is real and positive if and only if
\[\xi\leq \frac{\gamma^2\alpha^2}{8}\ ,\]
and if the formula is complex then the MGF does not exist. Indeed, direct differentiation (verified with Mathematica) of equation \eqref{eq:3on2RVmgf} leads to the following asymptotic for the derivative of the MGF:
\[\frac{\partial}{\partial\xi}\E \exp\left(\xi \int_0^T\frac{1}{Y_t}dt \right)\sim \frac{C}{\sqrt{\frac{\gamma^2\alpha^2}{8}-\xi}}\qquad\hbox{as }\xi\nearrow \frac{\gamma^2\alpha^2}{8}\ ,\]
where $C$ is a constant. Hence, the MGF ceases to exist for $\xi > \frac{\gamma^2\alpha^2}{8}$ and the singularity occurs in the first derivative.

Next, for any $q>0$ apply It\^o's lemma to $S_t^{-q}$,
\[dS_t^{-q} = -kS_t^{-q}\frac{1}{\sqrt{Y_t}}dW_t+ \frac{q(q+1)}{2}S_t^{-q}\frac{1}{Y_t}dt\ ,\]
and define the stopping times 
\[\mathcal T_M = \inf\left\{t>0\Big|\int_0^t\frac{1}{Y_u}du\geq M\right\}\ ,\] 
which allows for the stopped process's stochastic integral to be a martingale on finite interval $[0,T]$, and hence,
\[\E S_{t\wedge\mathcal T_M}^{-q} = S_0^{-q} \E \exp\left(\frac{q(q+1)}{2}\int_0^{t\wedge\mathcal T_M}\frac{1}{Y_u}du\right)\qquad \forall t\leq T\ . \]
Hence,
\begin{align*}
&\lim_{M\rightarrow \infty} \E S_{T\wedge\mathcal T_M}^{-q}\\
&  = \lim_{M\rightarrow \infty} S_0^{-q}\E \exp\left(\frac{q(q+1)}{2}\int_0^{T\wedge\mathcal T_M}\frac{1}{Y_u}du\right)\\
&  =S_0^{-q}\E \lim_{M\rightarrow \infty} \exp\left(\frac{q(q+1)}{2}\int_0^{T\wedge\mathcal T_M}\frac{1}{Y_u}du\right)\qquad\hbox{monotone convergence}\\
&=S_0^{-q}\E\exp\left(\frac{q(q+1)}{2}\int_0^T\frac{1}{Y_u}du\right)\ .
\end{align*}
where the last term is the MGF of realized variance and was shown to be infinite for some $q>0$. Now notice the following convexity inequality,
\[\frac{1}{S_T^q} \geq \frac{1}{S_{T\wedge\mathcal T_M}^q} - q \frac{S_T-S_{T\wedge\mathcal T_M}}{S_{T\wedge\mathcal T_M}^{q+1}} \ ,\]
which yields 
\begin{align*}
\E S_T^{-q} &\geq \E\left[S_{T\wedge\mathcal T_M}^{-q} - q (S_T-S_{T\wedge\mathcal T_M})S_{T\wedge\mathcal T_M}^{-(q+1)}\right]\\
&= \E\left[S_{T\wedge\mathcal T_M}^{-q} - q \underbrace{\E[(S_T-S_{T\wedge\mathcal T_M})|Y_{t\wedge\mathcal T_M}]}_{=0}S_{T\wedge\mathcal T_M}^{-(q+1)}\right]\\
&=\E S_{T\wedge\mathcal T_M}^{-q}\\
&\rightarrow S_0^{-q}\E\exp\left(\frac{q(q+1)}{2}\int_0^T\frac{1}{Y_u}du\right)\qquad\hbox{as }M\rightarrow \infty\ .
\end{align*}
On the other hand, from Fatou's lemma,
\begin{align*}
\E S_T^{-q}& = \E\liminf_{M\rightarrow\infty} S_{T\wedge\mathcal T_M}^{-q}\\
& \leq \liminf_{M\rightarrow\infty} \E S_{T\wedge\mathcal T_M}^{-q}\\
&=S_0^{-q}\E\exp\left(\frac{q(q+1)}{2}\int_0^T\frac{1}{Y_u}du\right)\ .
\end{align*}
Hence, $\E S_T^{-q}=S_0^{-q}\E\exp\left(\frac{q(q+1)}{2}\int_0^T\frac{1}{Y_u}du\right)$ and $\E S_T^{-q}<\infty$ if and only if $\frac{q(q+1)}{2}\leq \frac{\gamma^2\alpha^2}{8}$, or in terms of the notation from Proposition \ref{prop:impVolLowerBound}
\[\tilde q= \frac{\sqrt{1+\gamma^2\alpha^2}-1}{2}\ .\]

\section{The CEV Model}
\label{app:cev2}
Consider the CEV model with quadratic variance,
\[dY_t = \sigma Y_t^2 dW_t\ .\]
This process is a strict local martingale and is discussed in \cite{coxHobson2005}. In particular, the process $X_t = 1/Y_t^2$ is among the class of SDEs considered in \cite{feller}, and has natural boundaries at zero and infinity (i.e. both $Y$ and $S$ have zero probability of touching zero). Furthermore, the transition density of $S$ is

\begin{align*}
\mathbb P(Y_T\in dz|Y_t = y) &= \frac{y}{z^3}\frac{dz}{\sqrt{2\pi(T-t)\sigma^2}}\\
&\hspace{.8cm}\times\left(\exp\left(-\frac{\left(\frac1z-\frac1y\right)^2}{2(T-t)\sigma^2}\right)-\exp\left(-\frac{\left(\frac1z+\frac1y\right)^2}{2(T-t)\sigma^2}\right)\right)\ .
\end{align*}
Table \ref{tab:cev2moments} shows the expectations of some important functions of $Y_T$ for $\sigma=1$. 

\begin{table}[h!]
\centering
\begin{tabular}{|c | p{10cm} |} 
 \hline
 $g(Y_T)$ & $\E \{g(Y_T)|Y_0 = y\}$\\ 
 \hline
 \hline
$Y_T$& $y \left(1 -2 \Phi\left(\frac{-1}{y\sqrt T}\right)\right) $,\\
$Y_T^2$&$\sqrt{\frac{2y^2}{T}}D_+\left(\frac{1}{y\sqrt{2T}}\right)$,\\ 
$\log(Y_T)$&$\frac 12\left(2+\gamma_e+\log(2/T)+\frac{\partial}{\partial a}\IFI\left(0,\frac 32,\frac{-1}{2Ty^2}\right)\right)$,\\[0.25cm] 
$(Y_T-K)^+$&$y\left(\Phi(\kappa-\delta)-\Phi(-\delta)+\Phi(\delta) -\Phi(\delta+\kappa) \right)$ $-K\left(\Phi(\kappa+\delta)-\Phi(\delta-\kappa)+\delta^{-1}\left(\Phi'(\kappa+\delta) -\Phi'(\kappa-\delta)\right) \right)$,\\
&$\delta=\frac{1}{y\sqrt T}$ $\kappa = \frac{1}{K\sqrt T}$\\
 \hline
\end{tabular}
\caption{The moments for the CEV process $dY =  Y^2dW$. The special functions are the normal Gaussian CDF $\Phi$, the Dawson integral $D_+=e^{-x^2}\int_0^xe^{u^2}du$, the confluent hypergeometric function of the first kind $\IFI=\frac{\Gamma(b)}{\Gamma(b-a)\Gamma(a)}\int_0^1e^{ux}u^{a-1}(1-u)^{b-a-1}du$, and the Euler Gamma $\gamma_e\approx 0.577215665$. For general $\sigma>0$ the transition density shows that $\sigma\neq 1$ is the scaling of time given by $\E ^\sigma[g(Y_T)|Y_0 = y]=\E ^1[g(Y_{T\sigma^2})|Y_0 = y]$.}
\label{tab:cev2moments}
\end{table}

\subsection{CEV Volatility Process (Section \ref{sec:cev2vol})}
\label{app:CEVvol}

To determine whether or not $S_t$ is a true martingale it suffices to consider the case $S_0=1$ and the expectation
\begin{align*}
\E S_T&=\E \exp\left(-\frac 12\int_0^TY_u^2du+\sqrt{1-\rho^2}\int_0^TY_udW_u+\rho\int_0^TY_udB_u\right)\\
&=\E \exp\left(-\frac{\rho^2}{2}\int_0^TY_u^2du+\rho\int_0^TY_udB_u\right)\\
&=\E \left[\left(\frac{Y_T}{Y_0}\right)^{\rho/c}\exp\left(\frac{\rho(c-\rho)}{2}\int_0^TY_u^2du\right)\right]\ ,
\end{align*}
which is certainly a martingale if $\rho=0$. For $\rho<0$, define $\mathcal Z_t = \left(\frac{Y_t}{Y_0}\right)^{\rho/c}\exp\left(\frac{\rho(c-\rho)}{2}\int_0^tY_u^2du\right)$ and apply It\^o's lemma to get
\begin{align*}
\E S_T&=\E \mathcal Z_T= 1 +\E \left[ \rho\int_0^TY_u\mathcal Z_udB_u\right]=1\ ,
\end{align*} 
where the stochastic integral is a martingale because $ \E \int_0^T(Y_u\mathcal Z_u)^2du\leq  \int_0^T(\E Y_u^3)^{2/3}(\E\mathcal Z_u^6)^{1/3}du<\infty$ for $\rho<0$ and any $c>0$ (use the density in Appendix \ref{app:cev2} to verify that $\E Y_t^3<\infty$ and $\E Y_t^{-k}<\infty$ for all $k>0$), and hence $S_t$ is a true martingale. 

\subsubsection{MGF of $\VIX_T^2$}
Using the CEV process's 2nd moment form Table \ref{tab:cev2moments}, and taking constant $M >0$ and $\xi>0$, it is seen that the MGF of $\VIX_T^2$ can be broken into two terms, only one of which has the possibility of being infinite:
\begin{align*}
\E e^{\xi\smallVIX_T^2}&=\E \exp\left(\frac\xi\tau\E _T\int_T^{T+\tau}Y_u^2du\right)\\
&=\E \exp\left(\xi\frac{Y_T}{\tau}\int_T^{T+\tau}\sqrt{\frac{2}{c^2(u-T)}}D_+\left(\frac{1}{Y_T\sqrt{2c^2(u-T)}}\right)du\right)\\
&=\E \exp\left(\xi\frac{2}{\tau c^2}\int_{\tfrac{1}{Y_T\sqrt{2c^2\tau}}}^\infty \frac{1}{x^2}D_+(x)dx\right)\\
&=C_{M,\xi}\E \exp\left(\xi\frac{2}{\tau c^2}\int_{\tfrac{1}{Y_T\sqrt{2c^2\tau}}}^{\tfrac{1}{M\sqrt{2c^2\tau}}} \frac{1}{x^2}D_+(x)dx\right)\indicator{Y_T\geq M}\\
&\hspace{2cm}+\underbrace{\E \exp\left(\xi\frac{2}{\tau c^2}\int_{\tfrac{1}{Y_T\sqrt{2c^2\tau}}}^\infty \frac{1}{x^2}D_+(x)dx\right)\indicator{Y_T< M}}_{<\infty\qquad\forall M>0}\ ,
\end{align*}
where $C_{M,\xi} =  \exp\left(\xi\frac{2}{\tau c^2}\int_{\tfrac{1}{M\sqrt{2c^2\tau}}}^\infty \frac{1}{x^2}D_+(x)dx\right)<\infty$ for all $M>0$ and where there has been change of variable 
\begin{align*}
x &=\tfrac{1}{Y_T\sqrt{2c^2(u-T)}}\\
dx &= -\tfrac{1}{2Y_T\sqrt{2c^2}(u-T)^{3/2}}du = -\frac{Y_T\sqrt{2c^2}x^2}{2\sqrt{u-T}}du\ .
\end{align*}
Taking $M\geq 1/\sqrt{2c^2\tau}$ and examining the possibly infinite term, 
\begin{align}
\nonumber
& \E \exp\left(\xi\frac{2}{\tau c^2}\int_{\tfrac{1}{Y_T\sqrt{2c^2\tau}}}^{\tfrac{1}{M\sqrt{2c^2\tau}}}\frac{1}{x^2}D_+(x)dx\right)\indicator{Y_T\geq M}\\
\nonumber
&\geq\E \left[\exp\left(\xi\frac{2}{\tau c^2}\int_{\tfrac{1}{Y_T\sqrt{2c^2\tau}}}^{\tfrac{1}{M\sqrt{2c^2\tau}}}\left( \frac{1}{x}-\frac{2x}{3}\right)dx\right)\indicator{Y_T\geq M}\right]\\
\nonumber
&\geq e^{-\xi\frac{1}{3M^2(\tau c^2)^2}}\E \left[\exp\left(\xi\frac{2}{\tau c^2}\int_{\tfrac{1}{Y_T\sqrt{2c^2\tau}}}^{\tfrac{1}{M\sqrt{2c^2\tau}}} \frac{1}{x}dx\right)\indicator{Y_T\geq M}\right]\\
\nonumber
& = e^{-\xi\frac{1}{3M^2(\tau c^2)^2}}\E \left[e^{\xi\frac{2}{\tau c^2}\log(Y_T/M)}\indicator{Y_T\geq M}\right]\\
\label{eq:CEVvolLowerBound}
&=e^{-\xi\frac{1}{3M^2(\tau c^2)^2}} \E\left[\left(\frac{Y_T}{M}\right)^{\xi\frac{2}{\tau c^2}}\indicator{Y_T\geq M}\right]\ ,
\end{align}
where the inequality comes by using the Dawson-integral's MacLaurin series
\begin{align*}
D_+(x) &=\sum_{n=0}^\infty \frac{(-1)^n2^n}{(2n+1)!!}x^{2n+1}> x - \frac23x^3\qquad\hbox{for }0\leq x<1\ .
\end{align*}
The quantity in \eqref{eq:CEVvolLowerBound} is infinite if and only if $\xi\frac{2}{\tau c^2}>3$ because the CEV has infinite moments beyond 3 (this can be deduced by looking at the CEV density at the beginning of this appendix), and so a sufficient condition for the moment generating function of $\VIX_T^2$ to be infinite is
\[\xi> \frac{3\tau c^2}{2}\ .\]
This condition is also necessary, as the steps of \eqref{eq:CEVvolLowerBound} can be modified with the upper bound $\frac{1}{x^2}D_+(x)<1/x$ for $0<x<1$:
\begin{align}
\nonumber
& \E \exp\left(\xi\frac{2}{\tau c^2}\int_{\tfrac{1}{Y_T\sqrt{2c^2\tau}}}^{\tfrac{1}{M\sqrt{2c^2\tau}}}\frac{1}{x^2}D_+(x)dx\right)\indicator{Y_T\geq M}\\
\nonumber
&\leq\E \left[\exp\left(\xi\frac{2}{\tau c^2}\int_{\tfrac{1}{Y_T\sqrt{2c^2\tau}}}^{\tfrac{1}{M\sqrt{2c^2\tau}}} \frac{1}{x}dx\right)\indicator{Y_T\geq M}\right]\\
\nonumber
&= \E\left[\left(\frac{Y_T}{M}\right)^{\xi\frac{2}{\tau c^2}}\indicator{Y_T\geq M}\right]< \infty\qquad\hbox{if }~~~\xi\leq\frac{3\tau c^2}{2}\ .
\end{align}
Hence $\tilde \xi $ from Proposition \ref{prop:impVolLowerBound} is
\[\tilde \xi = \frac{3\tau c^2}{2}\ .\]

\subsubsection{Negative Moments}
For negative moments (for $q>0$) similar steps as those in the beginning of Appendix \ref{app:3halfsNegativeMoments} lead to
\begin{align*}
\E S_T^{-q} 
&= \E \exp\left(\frac{q+q^2}{2}\int_0^TY_t^2dt\right)\ .
\end{align*}
Next consider the process $X_t = Y_t^2$ and apply It\^o's lemma,
\[dX_t = c^2X_t^2dt+2cX_t^{3/2}dB_t\ ,\]
which is a 3/2 process, and if the moment exists then it is given the formula of \cite{carrSun},
\begin{align*}
\E S_T^{-q}=\E \exp\left(\frac{q+q^2}{2}\int_0^TX_tdt\right) 
&= \frac{\Gamma(\gamma-\alpha)}{\Gamma(\gamma)}\left(\frac{1}{2c^2TX_0}\right)^\alpha \IFI\left(\alpha,\gamma,\frac{-1}{2c^2TX_0}\right)\ ,
\end{align*}
where $\IFI$ is the confluent hypergeometric function and
\begin{align*}
\alpha&=-\frac14+\sqrt{\frac{1}{16}-\frac{q+q^2}{4c^2}}\\
\gamma&=2\left(\alpha+\frac34\right)\ .
\end{align*}
This is an analytic formula that is real and positive if $q+q^2\leq \frac{c^2}{4}$, and so $\E S_T^{-q}$ is finite if
\[q\leq\frac{-1+\sqrt{1+c^2}}{2}\ .\]
Sharpness of this finiteness inequality can be shown by following the same argument that was used to show the sharpness of negative moments condition in Section \ref{sec:3/2} and Appendix \ref{app:3/2MGF}. Hence, $\tilde q$ from Proposition \ref{prop:impVolLowerBound} is 
\[\tilde q = \frac{-1+\sqrt{1+c^2}}{2}\ .\]
\section{Maximum Domain of Attraction}
\label{app:mda}
Let $F(y) = \mathbb P_t(\VIX_T\leq y)$. Consider samples $(Y_\ell)_{\ell=1}^n$ where $Y_\ell\sim iid~ F$ for each $\ell$, and let $M_n = \max(Y_1,Y_2,\dots,Y_n)$. For $\alpha>0$ the generalized extreme-value distribution is
\[H_\alpha(y) \triangleq \Bigg\{
\begin{array}{ll}
&\\[-1.1cm]
\exp\left(-(1+y/\alpha)^{-\alpha}\right)&\alpha<\infty\ ,\\[-0.35cm]
\exp(-\exp(-y))&\alpha=\infty \ .
\end{array}
\]
Distribution function $F$ is said to be in the maximum domain of attraction (MDA) of $H_\alpha$ for $\alpha<\infty$ if and only if 
\[a_nM_n \Rightarrow H_\alpha\qquad\hbox{as }n\rightarrow\infty \ ,\]
where $a_n= F^{-1}(1-1/n)$ (see \cite{degan} or \cite{resnick1987} page 54-57). This is written as $F\in \mbox{MDA}(H_\alpha)$. 

For $\alpha=\infty$, $F\in MDA(H_\alpha)$ if and only if
\[1-F(y)\sim \exp(-\Psi(y))\qquad\hbox{as $y\rightarrow\infty$}\]
for some function $\Psi\in C^2(\mathbb R^+)$ with (i) $\Psi(y)\rightarrow\infty$ as $y\rightarrow\infty$, (ii) $\Psi'(y)>0$, and (iii) $(1/\Psi'(y))'\rightarrow 0$ as $y\rightarrow\infty$ (see \cite{degan}).


\small{\bibliography{refs}}

\end{document}